\documentclass[]{llncs}

\sloppy

\usepackage{xcolor}

\usepackage{amsmath}
\usepackage{amssymb}
\usepackage{url}
\usepackage{alltt}
\usepackage{wrapfig}

\usepackage{tikz}
\usetikzlibrary{automata,arrows,backgrounds,shapes}
\usetikzlibrary{decorations.pathmorphing}
\usetikzlibrary{shapes.arrows,chains}

\usepackage{cmap}
\usepackage{mathtext}

\usepackage[utf8]{inputenc}
\usepackage[english]{babel}

\usepackage[unicode,pdfborder={0 0 0}]{hyperref}

\usepackage{shuffle}
\usepackage{syntax}

\usepackage{caption}
\usepackage{subcaption}

\renewcommand{\subparagraph}{}
\usepackage[small,compact]{titlesec}
\addtolength{\abovecaptionskip}{-0.8em}
\addtolength{\textfloatsep}{-1.25em}
\addtolength{\intextsep}{-1.0em}

\usepackage{enumitem}

\usepackage{sty/comments}
\usepackage{sty/mathematics}
\usepackage{sty/programs}
\usepackage{sty/pnets}

\newcommand{\acks}{\section*{Acknowledgements}}

\renewenvironment{quote}{%
  \list{}{%
    \leftmargin0.3cm   
    \rightmargin\leftmargin
  }
  \item\relax
}
{\endlist}

\begin{document}

\title{Checking Robustness against TSO}

\author{Ahmed Bouajjani\inst{1} \and Egor Derevenetc\inst{2,3} \and Roland Meyer\inst{3}}
\institute{$^1$LIAFA, University Paris 7 \quad $^2$Fraunhofer ITWM \quad $^3$University of Kaiserslautern}

\maketitle
\begin{abstract}
We present algorithms for checking and enforcing robustness of concurrent
programs against the Total Store Ordering (TSO) memory model.
A program is robust if all its TSO computations correspond to computations under the Sequential
Consistency (SC) semantics.

We provide a complete characterization of non-robustness in terms of so-called
attacks: a restricted form of (harmful) out-of-program-order executions.
Then, we show that detecting attacks can be parallelized, and can be solved using state reachability queries under SC semantics in a suitably instrumented program obtained by a linear size source-to-source translation.
Importantly, the construction is valid for an arbitrary number of addresses and an
arbitrary number of parallel threads, and it is independent from the data
domain and from the size of store buffers in the TSO semantics.
In particular, when the data domain is finite and the number of addresses is fixed,
we obtain decidability and complexity results for robustness, even for an arbitrary number of threads.

As a second contribution, we provide an algorithm for computing an optimal set of fences that enforce robustness.
We consider two criteria of optimality: minimization of program size and maximization of its performance.
The algorithms we define are implemented, and we successfully applied them to analyzing and correcting several concurrent algorithms.
\end{abstract}

\everymath{\displaystyle}
\section{Introduction}

Sequential consistency (SC) \cite{Lamport79} is a natural shared-memory model where the actions of different threads are interleaved while the program order between actions of each thread is preserved. However, for performance reasons, modern multiprocessors implement weaker memory models relaxing program order. For instance, the common store-to-load relaxation, which allows loads to overtake stores, reflects the use of {\em store buffers}. It is actually the main feature of the TSO (Total Store Ordering) model adopted, e.g., in x86 machines \cite{SewellCACM2010}.

Nonetheless, most programmers usually assume that memory accesses are performed according to SC where all shared-memory accesses are instantaneous and atomic.
This assumption is actually safe for {\em data-race-free} programs \cite{AdveHillDRF1993},
but in many situations data-race-freedom does not apply. This is, for instance, the case of programs implementing synchronization operations, concurrency libraries, and other performance-critical system services employing lock-free synchronization.
In most cases, programmers design programs that are {\em robust} against relaxations, i.e., for which relaxations do not introduce behaviors that are not allowed under SC. Memory fences must be included appropriately in programs in order to prevent non-SC behaviors. Getting such programs right is a notoriously difficult and error-prone task.
Therefore, important issues in this context are (1) checking robustness of a program against relaxations of a memory model, and (2) identifying a set of program locations where it is {\em necessary} to insert fences to ensure robustness.

In this paper we address these two issues in the case of TSO.
We consider a general setting without fixed bounds on the shared memory size, nor on the size of the store buffers in the TSO semantics, nor on the number of threads. This allows us to reason about robustness of general algorithms without assuming any fixed values for these parameters that depend on the actual machine's implementation.
Moreover, we tackle these issues for general programs, independently from the domain of data they manipulate.

Robustness against memory models has been addressed first by Burckhardt and Musuvathi in \cite{burckhardt-musuvathi-CAV08} (actually, for TSO only), and subsequently by Burnim et al. in \cite{Sen2011}.
Alglave and Maranget developed a general framework for reasoning about robustness against memory models in \cite{Alglave2010,AlglaveM11} (where the term {\em stability} is used instead of robustness). Roughly, these works are based on characterizing robustness in terms of acyclicity of a suitable happens-before relation. In that, they follow the spirit of Shasha and Snir \cite{ShashaSnir88} who introduced a notion of {\em trace} that captures the control and data dependencies between events of an SC computation, and established that computations that are not SC have a happens-before relation 
that contains a cycle. We adopt here the same notion of robustness, i.e., a program is (trace-)robust if each of its TSO computations has the same trace as some of its SC computations.

From an algorithmic point of view, the existing works mentioned above {\em do not provide  decision procedures} for robustness. \cite{burckhardt-musuvathi-CAV08,Sen2011}
provide testing procedures based on enumerating TSO runs and checking that they do not produce happens-before cycles.
Clearly, while these procedures can establish non-robustness, they can never prove a program robust.
On the other hand, \cite{AlglaveM11} provides a sound over-approximate static analysis that allows to prove robustness, but may also inaccurately conclude to non-robustness and insert fences unnecessarily.
We are interested here in developing an approach that allows for precise checking of trace-robustness, and for optimal fence insertion (in a sense defined later).

In our previous work \cite{BMM11}, trace-robustness against TSO has been proved to  be decidable and \pspace-complete, even for unbounded store buffers, in the case of a fixed number of threads and assuming a fixed number of shared variables, ranging over a finite data domain.
The method that shows this decidability and complexity result does not provide a practical algorithm: it is based on a non-deterministic, bounded enumeration of computations. 
Moreover, it does not carry over to the general setting we consider here.
Therefore, in this paper we propose a novel approach to checking robustness that is fundamentally different from \cite{BMM11}.
We provide a general, source-to-source reduction of the trace-robustness problem against TSO to the {\em state reachability problem under SC semantics}. In other words, we show that trace-robustness is not more expensive than SC state reachability, which is the unavoidable problem to be solved by any precise decision algorithm for concurrent programs.
This is the {\em key contribution of the paper} from which we derive other results, such as decidability results in particular cases, as well as an algorithm for efficient fence insertion.

To establish our reduction, we first provide a complete characterization of non-robustness in terms of so-called {\em feasible attacks}. An attack is a pair of load and store instructions of a thread, called the attacker,
whose reordering can lead to a non-SC computation.
In that case we say the attack is feasible, because it has a (TSO) witness computation.
The special form of witness computations then allows us to detect them by tracking SC computations of an {\em instrumented program}.
Given a potential attack, we show how to check its feasibility by solving an SC state reachability query in a linear-size instrumented program.
The fact that {\em only SC semantics} (of the instrumented program) needs to be considered for detecting non-SC behaviors (of the original program) is important: it relieves us from examining TSO computations, which obliges to encode (somehow) the contents of store buffers (as in, e.g., \cite{burckhardt-musuvathi-CAV08,Sen2011}).
Interestingly, the detection of feasible attacks can be parallelized, which speeds up the decision procedure.
Overall, we provide a general reduction of the TSO robustness problem to a quadratic number (in the size of the program) of state reachability queries under the SC semantics in linear-size instrumented programs of the same type as the original one. Our construction is source-to-source and is valid for (1) an arbitrary number of memory addresses/variables, (2) an arbitrary data domain, (3) an arbitrary number of threads, and (4) unbounded store buffers.

With this reduction, we can harness all available techniques and tools for solving reachability queries (either exactly, or approximately) in various classes of concurrent programs, regardless of decidability and complexity issues. It also yields decision algorithms for significant classes of programs. Assume we have a finite number of memory addresses, and the data domain is finite. Then, for a fixed number of threads, a direct consequence of our reduction is that the robustness problem is decidable and in \pspace{} since it is polynomially reducible to state reachability in finite-state concurrent programs \cite{Kozen77}. Therefore, we obtain in this case a {\em simple} robustness checking algorithm which matches the complexity upper bound proved in \cite{BMM11}.
Our construction also provides an effective decision algorithm for the up to now open case where the {\em number of threads is arbitrarily large}. Indeed, state reachability queries in this case can be solved in vector addition systems with states (VASS), or equivalently as coverability in Petri nets, which is known to be decidable \cite{Rackoff1978} and EX\pspace-hard \cite{Lipton1976}. In both cases (fixed or arbitrary number of threads) the decision algorithms do not assume bounded store buffers.

As last contribution, we address the issue of enforcing robustness by fence insertion.
Obviously, inserting a fence after each store ensures robustness, but it also ruins all performance benefits that a relaxed memory model brings.
A natural requirement on the set of fences is irreducibility, i.e.,  minimality wrt. set inclusion.
Since  there may be several irreducible sets enforcing robustness, it is natural to ask for a set that optimizes some notion of cost. We assume that we have a {\em cost function} that defines the cost of inserting a fence at each program location. For instance, by assuming cost $1$ for all locations, we optimize the size of the fence set. Other cost functions reflect the performance of the resulting program.
We propose an algorithm which, given a cost function, computes an optimal set of fences. The algorithm is based on 0/1-integer linear programming and
exploits the notion of attacks to guide the selection of fences.

We implemented the algorithms (using SPIN as a back-end reachability checker), and applied them successfully to several examples, including mutual exclusion protocols and concurrent data structures. The experiments we have carried out show that our approach is quite effective: (1) Many of the attacks to be examined can be discarded by simple syntactic checks (e.g., the presence of a fence between the store and load instructions), and those that require solving reachability queries are handled in few seconds, (2) the fence insertion procedure finds efficiently optimal sets of fences, in particular, it can handle the version of the Non-Blocking Write protocol  \cite{KopetzNBW93} described in \cite{Owens2010} (where the write is guarded by a Linux x86 spinlock) for which Owens' method based on so-called {\em triangular data races} (see related work below) inserts unnecessary fences.\\[2mm]
\paragraph{Related work:}
There are only few results on decidability and complexity of relaxed memory models. Reachability under TSO has been shown to
be decidable but non-primitive recursive \cite{ABBM10} in the case of a finite number of threads and a finite data domain. In the same case, trace-robustness has been shown to be \pspace-complete in \cite{BMM11} using a combinatorial approach.
The method we adopt in this paper is conceptually and technically different from the one we took in \cite{BMM11}. While we reuse from \cite{BMM11} the fact that it is possible to reason on 
TSO computations where only one thread has reordered its actions, we develop here a new approach where the main technical contributions reside in
the characterization of non-robustness in terms of existence of feasible attacks and in
the instrumentation we provide to reduce trace-robustness to SC state reachability. Besides getting decidability and complexity results, this reduction allows to
leverage all the existing verification methods and tools for checking (SC) state reachability in various classes of programs to tackle the issue of checking and enforcing robustness against TSO.

Alur et al. have shown that checking sequential consistency of a concurrent implementation wrt. a specification is undecidable in general \cite{Alur96}.
This result does not contradict our findings: we consider here the special case where the implementation is the TSO semantics and the specification is the SC semantics of a program. In \cite{GibbonsKorach1997}, the problem of deciding whether a given computation is SC feasible has been proved \np-complete. Robustness is concerned with all TSO computations, instead.

As mentioned above, the problem of checking and enforcing trace-robustness against weak memory models has been addressed in \cite{burckhardt-musuvathi-CAV08,Sen2011,AlglaveM11}, but none of these works provide (sound and complete) decision procedures. Owens proposes in \cite{Owens2010} a notion of robustness that is {\em stronger} than trace-robustness, based on detecting triangular data races.
This allows for sound trace-robustness checking but, as Owens shows in his paper, in some cases leads to unnecessary fences which can be harmful for performance.
Moreover, the notion of triangular data races defined in \cite{Owens2010} comes without an algorithm for checking it\footnote{Citation from \cite{Owens2010}: ``{\em ... formal reasoning directly on traces can
be tedious, so a program logic or sound static analyzer specialized to proving
triangular-race freedom might make the application of TRF more convenient.}''}.
Complexity considerations (using the techniques in \cite{BMM11}) show that detecting triangular data races requires solving state reachability queries under SC. Therefore, as we show here, checking trace-robustness is not more expensive than detecting triangular data races.

State-based robustness (which preserves the reachable states) is a weaker robustness criterion, but does not preserve path properties in contrast to trace-robustness, and is of significantly higher complexity (non-primitive recursive as it can be deduced from \cite{ABBM10},  instead of \pspace). It has been addressed in a precise manner in \cite{AbdullaACLR12} where a symbolic decision procedure together with a fence insertion algorithm are provided. The same issue is addressed in \cite{KupersteinVY11,KupersteinVY12} using over-approximate reachability analysis based on abstractions of the store buffers.

Finally, let us mention work that considers the dual approach: starting from a robust program, remove unnecessary fences \cite{VafeiadisN11}.
This work is aimed at compiler optimisations and does not provide a decision procedure for robustness.
It can also not find an optimal set of fences the enforce trace-robustness.

\section{Parallel Programs}\label{Section:Programs}

\paragraph{Syntax}
We consider parallel programs with shared memory as defined by the grammar in Figure~\ref{Figure:SyntaxThreads}.
Programs have a name and consist of a finite number of threads.
Each thread has an identifier and a list of local registers it operates on.
The thread's source code is given as a finite sequence of labelled instructions.
More than one instruction can be marked by the same label;
this allows us to mimic expressive constructs like conditional
branching and iteration with a lightweight syntax.
The instruction set includes loads from memory to a local register, stores to memory,
memory fences to control TSO store buffers, local computations, and assertions.
Figure~\ref{Figure:Dekker} provides a sample program.

\begin{figure}

\begin{minipage}[t]{0.5\textwidth}
\setlength{\grammarindent}{5em}
\setlength{\grammarparsep}{\parskip}
\begin{grammar}
  <prog> ::= "program" <pid> <thrd>$^*$

  <thrd> ::= "thread" <tid>\\"regs" <reg>$^*$\\"init" <label>\\"begin" <linst>$^*$ "end"

  <linst> ::= <label>":" <inst>";" "goto" <label>";"

  <inst> ::= <reg> $\leftarrow$ "mem["<expr>"]"
  \alt "mem["<expr>"]" $\leftarrow$ <expr>
  \alt "mfence"
  \alt <reg> $\leftarrow$ <expr>
  \alt "assert" <expr>

  <expr> ::= <fun>"("<reg>$^*$")"
\end{grammar}
\captionof{figure}{Syntax of parallel programs.}
\label{Figure:SyntaxThreads}
\end{minipage}
\hfill
\begin{minipage}[t]{0.45\textwidth}
\small
\begin{alltt}
\theprogram{Dekker}{
\thethread{\athread_1}{r_1}{l_0}{
\thetransition{l_0}{l_1}{\thestore{\xaddr}{1}}
\thetransition{l_1}{l_2}{\theload{r_1}{\yaddr}}
}
\thethread{\athread_2}{r_2}{l_0'}{
\thetransition{l_0'}{l_1'}{\thestore{\yaddr}{1}}
\thetransition{l_1'}{l_2'}{\theload{r_2}{\xaddr}}
}
}
\end{alltt}
\captionof{figure}{Simplified version of Dekker's mutex algorithm. Under SC, it is impossible that $r_1=r_2=0$ when both threads reach $l_2$ and $l_2'$.}
\label{Figure:Dekker}
\end{minipage}

\end{figure}

We assume a program comes with two sets: a \emph{data domain} $\datadomain$ and a \emph{function domain} $\fundom$.
The data domain should contain value zero: $0\in\datadomain$.
Moreover, we assume that all values from $\datadomain$ can be used as addresses.
Each memory location accessed by loads and stores is identified by such an address, and memory locations identified by different addresses do not overlap.
The set $\fundom$ contains functions that are defined on the data domain and
can be used in the program.
Note that we do not make any finiteness assumptions.
\paragraph{TSO Semantics}
Fix a program $\aprog$ with threads $\threaddom:=\{t_1, \ldots, t_n\}$.
Let each thread $\athread_i$ have the initial label $\alab_{0, i}$ and declare registers $\listof{\areg_i}$.
We define the set of variables as the union of addresses and registers: $\vardom:= \datadomain \cup \cup_{i\in [1, n]}\listof{\areg_i}$.
We denote the set of all instruction labels that occur in threads by $\labdom$.

The TSO semantics is operational, in terms of states and labelled transitions between them. On the x86 TSO architecture, each processor effectively has a local FIFO buffer that keeps stores for later execution~\cite{Owens2009,SewellCACM2010,burckhardt-musuvathi-CAV08,Sen2011}.
Therefore, a \emph{state} is a triple $\conf=(\pcconf, \valconf, \bufconf)$ where program counter $\pcconf\colon\threaddom\rightarrow\labdom$ holds, for each thread, the label of the instruction(s) to be executed next.
The valuation $\valconf:\vardom\rightarrow\datadomain$ gives the values of registers and memory locations.
The third component $\bufconf\colon\threaddom\rightarrow(\datadomain\times\datadomain)^*$ is the (per thread) buffer content: a sequence of address-value pairs $\assignl{\anaddr}{\aval}$.

In the \emph{initial state} $\initconf := (\initpcconf, \initvalconf, \initbufconf)$
the program counter is set to the initial labels, $\initpcconf(t_i)=\alab_{0, i}$ for all $t_i\in \threaddom$,
registers and addresses hold value zero, $\initvalconf(x)=0$ for all $x\in\vardom$,
and all buffers are empty, $\initbufconf(\athread):=\varepsilon$ for all $t\in\threaddom$.
Here, $\varepsilon$ denotes the empty sequence.

Instructions yield transitions between states that are labelled by \emph{actions} from $\actdom:=\threaddom\times\left(\set{\issueact, \localact}\cup(\set{\loadact,\storeact}\times\datadomain\times\datadomain)\right)$.
An action consists of the thread id and the actual arguments of an executed instruction.
We use $\localact$ to abstract assignments and asserts that are local to the thread.
The issue action $\issueact$ indicates that a store was executed on the processor.
The store action $(\athread,\storeact,\anaddr,\aval)$ gives the moment when the store becomes visible in memory.

The \emph{TSO transition relation} $\tsotrans{}$ is the smallest relation between TSO states that is defined by the rules in Table~\ref{Table:TSORules}.
The rules repeat, up to notation and support for locked instructions, Figure~1 from \cite{Owens2009}.
The first two rules implement loads from the buffer and from the memory respectively.
By the third rule, store instructions enqueue write operations to the buffer.
The fourth rule non-deterministically dequeues and executes them on memory.
The fifth rule defines that memory fences can only be executed when the buffer is empty.
The last two rules refer to local assignments and assertions.
We omitted locked instructions to keep things simple.
Their introduction is straightforward and does not affect the results.
Indeed, our implementation supports them \cite{Trencher}.

\begin{table}[thb!]
\centering
\therules{
\therule
{\synt{instr} = \theload{\areg}{\afun_\anaddr(\listof{\areg_\anaddr})}, $\anaddr = \afun_\anaddr(\valconf(\listof{\areg_\anaddr}))$, $\projection{\bufconf(\athread)}{(\assignl{\anaddr}{*})} = \beta\cdot(\assignl{\anaddr}{\aval})$}
{$(\pcconf,\valconf,\bufconf)\tsotrans{(\athread,\loadact,\anaddr,\aval)} (\pcconf',\valconf[\areg:=\aval],\bufconf)$}
\therule
{\synt{instr} = \theload{\areg}{\afun_\anaddr(\listof{\areg_\anaddr})}, $\anaddr = \afun_\anaddr(\valconf(\listof{\areg_\anaddr}))$, $\projection{\bufconf(\athread)}{(\assignl{\anaddr}{*})} = \varepsilon$, $\aval=\valconf(\anaddr)$}
{$(\pcconf,\valconf,\bufconf)\tsotrans{(\athread,\loadact,\anaddr,\aval)} (\pcconf',\valconf[\areg:=\aval],\bufconf)$}
\therule
{\synt{instr} = \thestore{\afun_\anaddr(\listof{\areg_\anaddr})}{\afun_\aval(\listof{\areg_\aval})}, $\anaddr = \afun_\anaddr(\valconf(\listof{\areg_\anaddr}))$, $\aval=\afun_\aval(\valconf(\listof{\areg_\aval}))$}
{$(\pcconf,\valconf,\bufconf)\tsotrans{(\athread,\issueact)}(\pcconf',\valconf,\bufconf[\athread:=\bufconf(\athread)\cdot(\assignl{\anaddr}{\aval})])$}
\therule
{$\bufconf(\athread) = (\assignl{\anaddr}{\aval})\cdot\beta$}
{$(\pcconf,\valconf,\bufconf)\tsotrans{(\athread,\storeact,\anaddr,\aval)}(\pcconf,\valconf[\anaddr:=\aval],\bufconf[\athread:=\beta])$}
\therule
{\synt{instr} = \themfence{}, $\bufconf(\athread)=\varepsilon$}
{$(\pcconf,\valconf,\bufconf)\tsotrans{(\athread,\localact)}(\pcconf',\valconf,\bufconf)$}
\therule
{\synt{instr} = \thelocal{\areg}{\afun(\listof{\areg})}}
{$(\pcconf,\valconf,\bufconf)\tsotrans{(\athread,\localact)}(\pcconf',\valconf[\areg:=\afun(\valconf(\listof{\areg}))],\bufconf)$}
\therule
{\synt{instr} = \thecondition{\afun(\listof{\areg})}, $\afun(\valconf(\listof{\areg}))\neq 0$}
{$(\pcconf,\valconf,\bufconf)\tsotrans{(\athread,\localact)}(\pcconf',\valconf,\bufconf)$}
}
\caption{TSO transition rules, assuming $\pcconf(\athread)=\alab$, an instruction \synt{instr} at label $\alab$ with destination $\alab'$, and $\pcconf':=\pcconf[\athread:=\alab']$.
We use $\projection{}{}$ to denote projection and $*$ for any value, i.e., $\projection{\bufconf(\athread)}{(\assignl{\anaddr}{*})}$ is a list of address-value pairs in the buffer of thread $\athread$ having the address $a$.}
\label{Table:TSORules}
\end{table}

The set of \emph{TSO computations} contains all sequences of actions that lead from the initial TSO state to a state where all buffers are empty:
\begin{align*}
 \tsocomputof{\aprog}&:=\setcond{\tau\in\actdom^*}{\initconf\tsotrans{\tau}
\conf\text{ for some TSO state }\phantom{\}}\\
&\hspace{3cm}\conf=(\pcconf, \valconf, \bufconf)\text{ with
}\bufconf(\athread)=\varepsilon\text{ for all }\athread\in\threaddom}.
\end{align*}
The requirement of empty buffers is not important for our results but rather a modelling choice.
Figure~\ref{Figure:DekkerTsoComputation} presents a TSO computation of Dekker's program where the store of the first thread is delayed past the load.

\begin{figure}[htb!]
\centering
\begin{tikzpicture}
  \node(tau){$\tau =$};
  \node(issue1)[right of=tau,node distance=0.8cm,color=red]{$(\athread_1,\issueact)$};
  \node(load1)[right of=issue1,node distance=1.3cm,color=red]{$(\athread_1,\loadact,\yaddr,0)$};
  \node(issue2)[right of=load1,node distance=1.3cm,color=blue]{$(\athread_2,\issueact)$};
  \node(store2)[right of=issue2,node distance=1.3cm,color=blue]{$(\athread_2,\storeact,\yaddr,1)$};
  \node(load2)[right of=store2,node distance=1.55cm,color=blue]{$(\athread_2,\loadact,\xaddr,0)$};
  \node(store1)[right of=load2,node distance=1.55cm,color=red]{$(\athread_1,\storeact,\xaddr,1)$};
  \draw[-,thick] (issue1.north)edge[out=8,in=172](store1.north);
\end{tikzpicture}
\caption{A TSO computation of Dekker's algorithm.
Actions drawn in red belong to the first thread, actions in blue belong to the second thread.
The arc connects the issue action with the corresponding delayed store action of the first thread.}
\label{Figure:DekkerTsoComputation}
\end{figure}

\paragraph{SC Semantics}
Under SC \cite{Lamport79}, stores are not buffered and hence states are pairs $(\pcconf, \valconf)$.
The rules for SC transitions are appropriately simplified TSO rules.
To avoid case distinctions between TSO and SC in the definition of traces, a store instruction generates two actions: an issue followed by the store.
Memory fences have no effect under SC.
We denote the set of all SC computations of $\aprog$ by
\begin{align*}
\sccomputof{\aprog}:=\setcond{\sigma\in
\actdom^*}{\initconf\sctrans{\sigma}\conf\text{ for some SC state }\conf}.
\end{align*}

\section{TSO Robustness}\label{Section:Robustness}
Robustness ensures that the behaviour of a program does not change when it is run on TSO hardware as compared to SC.
We study trace-based robustness as in \cite{ShashaSnir88,burckhardt-musuvathi-CAV08,Sen2011,AlglaveM11,BMM11}.
Traces capture the essence of a computation: the control and data dependencies among actions.
More formally, consider some computation $\alpha\in\sccomputof{\aprog}\cup\tsocomputof{\aprog}$.
The \emph{trace} $\traceof{\alpha}$ is a graph where the nodes are labelled by the actions in $\alpha$ (stores and issue yield one node).
The arcs are defined by the following relations.
We have a per thread $\athread\in\threaddom$ total order $\progorder^{\athread}$ that gives the order in which the actions of $\athread$ where issued.
Similarly, we have a per address $a\in\datadomain$ total order $\storeorder^{a}$ that gives the ordering of all stores to $a$.
We call the unions $\progorder\ := \cup_{\athread\in\threaddom}\progorder^{\athread}$ and $\storeorder\ :=\cup_{a\in\datadomain}\storeorder^{a}$
the \emph{program order} and the \emph{store order} of the trace.
Finally, there is a \emph{source relation} $\sourceorder$ that determines the store from which a load receives its value.
By $\traces_{\textsf{mm}}(\aprog):=\traceof{\comput_{\textsf{mm}}(\aprog)}$ with $\textsf{mm}\in\{\SC, \TSO\}$ we denote the set of all \emph{SC/TSO traces} of program $\aprog$.
The \emph{TSO robustness problem} checks whether the sets coincide.
\vspace{2mm}
\begin{quote}
{\bf Given:} A parallel program $\aprog$.\\
{\bf Problem:} Does $\tsotracesof{\aprog}=\sctracesof{\aprog}$ hold?
\end{quote}
\vspace{2mm}
Since inclusion $\sctracesof{\aprog}\subseteq\tsotracesof{\aprog}$ always holds, we only have to check the reverse inclusion.
We call a computation $\tau\in\tsocomputof{\aprog}$ \emph{violating} if its trace is not among the SC traces of the program, i.e., $\traceof{\tau}\notin\sctracesof{\aprog}$.
Violating \TSO-computations employ cyclic accesses to addresses that SC is unable to serialize \cite{ShashaSnir88}.
The cyclic accesses are made visible using a \emph{conflict relation} from loads to stores.
Intuitively, $\loadact\conflictorder \storeact$ means that $\storeact$ overwrites a value that $\loadact$ reads.
The union of all four relations is commonly called \emph{happens-before relation} of the trace, $\happensbefore\ :=\ \progorder\cup\storeorder\cup\sourceorder\cup\conflictorder$.
\begin{lemma}[{\cite{ShashaSnir88}}]
\label{Lemma:CharacterisationViolationHappensBefore}
Consider TSO trace $\traceof{\tau}\in\tsotracesof{\aprog}$.
Then $\traceof{\tau}\in \sctracesof{\aprog}$ iff $\happensbefore$ is acyclic.
\end{lemma}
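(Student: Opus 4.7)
The plan is to prove both directions by exhibiting/manipulating a total order on the actions of $\traceof{\tau}$. Recall that an SC computation executes one action at a time on memory, so any SC computation $\sigma$ induces a total order $\linord_\sigma$ on its actions; in contrast, a TSO computation can reorder a store past a later load of the same thread, which is precisely what may create a cycle in $\happensbefore$.

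For the forward direction, assume there exists $\sigma\in\sccomputof{\aprog}$ with $\traceof{\sigma}=\traceof{\tau}$. I will show each of the four relations $\progorder, \storeorder, \sourceorder, \conflictorder$ is contained in $\linord_\sigma$. Program order $\progorder$ is immediate since SC preserves per-thread issue order; $\storeorder$ follows because under SC each store is immediately committed to memory, so the store order on a given address is exactly the order of store actions in $\sigma$; $\sourceorder$ holds because a load in $\sigma$ reads the value written by the latest prior store to the same address, which must be the source store (same trace). For $\conflictorder$, if load $\ell$ is sourced from store $s$ and $s'$ is another store to the same address with $s\storeorder s'$, then in $\sigma$ we must have $s$ before $\ell$ (by $\sourceorder$) and $s'$ after $s$ (by $\storeorder$); moreover $s'$ must appear after $\ell$ in $\sigma$, for otherwise $\ell$ would be sourced from $s'$, contradicting that the traces coincide. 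Thus $\happensbefore\subseteq\linord_\sigma$, and since $\linord_\sigma$ is a strict total order, $\happensbefore$ is acyclic.

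For the backward direction, assume $\happensbefore$ is acyclic and fix any linear extension $\linord$ of it (topological sort). I will construct an SC computation $\sigma$ by executing the actions of $\traceof{\tau}$ in the order $\linord$, with store and issue nodes being emitted as adjacent issue-then-store pairs (compatible with the SC rules of Section 2). It must be checked that $\sigma$ is a legal SC run and that $\traceof{\sigma}=\traceof{\tau}$. Legality of all non-load steps follows from the preservation of $\progorder$ (so each thread's program counter advances consistently) and from the SC rules being unconstrained on the memory side apart from load matching. The crucial point is that each load $\ell$ reads in $\sigma$ exactly the value prescribed by $\sourceorder$ in $\traceof{\tau}$: if $s\sourceorder\ell$, then $s\linord\ell$; and for every other store $s'$ to the same address, the trichotomy on $\storeorder$ together with $\sourceorder$ and $\conflictorder$ (which puts $\ell\linord s'$ whenever $s\storeorder s'$ and $s' \linord s$ whenever $s'\storeorder s$) ensures that $s$ is the most recent store to that address before $\ell$ in $\linord$. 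Hence the load in $\sigma$ returns the same value as in $\tau$, and by construction the induced program, store, and source relations of $\traceof{\sigma}$ coincide with those of $\traceof{\tau}$, giving $\traceof{\tau}\in\sctracesof{\aprog}$.

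The main obstacle is the backward direction, specifically verifying that the load-matching condition of the SC rule holds at every load when we execute in the order $\linord$. The whole content of $\conflictorder$ exists precisely to rule out the bad case of an interfering store $s'$ slipping between $s$ and $\ell$ in $\linord$, and one must do the case analysis on $s\storeorder s'$ versus $s'\storeorder s$ carefully to see that $\happensbefore$-acyclicity alone suffices to forbid it. The forward direction is routine once one observes that equality of traces transports the source relation verbatim.
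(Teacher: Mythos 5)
The paper does not actually prove this lemma---it imports it from Shasha and Snir \cite{ShashaSnir88}---so there is no in-paper argument to compare against. Your proof is the standard one for this characterization and is essentially correct: the forward direction embeds $\happensbefore$ into the total execution order of an SC computation with the same trace, and the backward direction topologically sorts the trace and uses the trichotomy on $\storeorder$ together with $\sourceorder$ and $\conflictorder$ to show every load sees exactly its source store. Two details deserve an explicit sentence. First, handle loads with no source store (reads of the initial value): the conflict relation as defined in the appendix puts such a load before \emph{every} store to its address, which is exactly what makes the reconstructed load return $0$; your case analysis only covers loads that have a source store. Second, legality of the reconstructed run is not quite ``unconstrained on the memory side'': you need that asserts pass, local computations take the same branches, and each thread follows the same control path, which follows by induction along program order from the fact that all loads return the same values and hence all register valuations coincide with those in $\tau$. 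With those two additions the argument is complete.
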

Consider the computation in Figure~\ref{Figure:DekkerTsoComputation}.
The load from thread $\athread_1$ conflicts with the store from $\athread_2$ because the load reads the initial value of $\yaddr$ while the store overwrites it.
The situation with the load from $\athread_2$ and the store from $\athread_1$ is symmetric.
Together with the program order, the conflict relations produce a cycle:
\begin{minipage}{\textwidth}
\centering
\begin{tikzpicture}[node distance=0.66cm,every state/.style={inner sep=1pt,fill=black!100,minimum size=0.5mm,initial text=}]
    \node[state] (sx1) [label=left:{$(\athread_1,\storeact,\xaddr,1)$}] {};
    \node[state] (ly)  [below of=sx1,label=left:{$(\athread_1,\loadact,\yaddr,0)$}] {};

    \node[state] (sy1) [node distance=1.5cm,right of=ly,label=right:{$(\athread_2,\storeact,\yaddr,1)$}] {};
    \node[state] (lx)  [above of=sy1,label=right:{$(\athread_2,\loadact,\xaddr,0)$}] {};

    \draw[->] (sx1) -- (ly) node [midway,right] {$\scriptstyle po$};

    \draw[->] (sy1) -- (lx) node [midway,left] {$\scriptstyle po$};

    \draw[->] (ly) edge[densely dotted] node[midway,below] {$\scriptstyle cf$} (sy1);
    \draw[->] (lx) edge[densely dotted] node[midway,above] {$\scriptstyle cf$} (sx1);
\end{tikzpicture}

\end{minipage}
Indeed, there is no SC computation with this trace, as predicted by Lemma~\ref{Lemma:CharacterisationViolationHappensBefore}.

Lemma~\ref{Lemma:CharacterisationViolationHappensBefore} does not provide a method for finding cyclic traces.
We have recently shown that TSO robustness is decidable, in fact, \pspace-complete \cite{BMM11}.
The algorithm underlying this result, however, is based on enumeration and not meant to be implemented.
The main contribution of the present work is a novel and practical approach to checking robustness.

The only concept we keep from our earlier work are minimal violations.
A \emph{minimal violation} is a violating computation that uses a minimal total number of delays.
Interestingly, for minimal violations the following holds.
\begin{lemma}[Locality \cite{BMM11}, Appendix~\ref{Appendix:locality}]\label{Lemma:Locality}
In a minimal violation, only a single thread delays stores.
\end{lemma}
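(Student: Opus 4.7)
The plan is to proceed by contradiction. Suppose $\tau$ is a minimal violation in which two distinct threads $t$ and $t'$ both delay at least one store. By Lemma~\ref{Lemma:CharacterisationViolationHappensBefore}, $\traceof{\tau}$ contains a happens-before cycle $C$. I aim to exhibit a TSO computation $\tau'$ with strictly fewer delays whose trace is also not in $\sctracesof{\aprog}$, contradicting the minimality of $\tau$.

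The first step is a structural observation on $\happensbefore$-cycles. Assigning to each action its execution time in $\tau$ and to each store node its commit time, the edges of $\storeorder$, $\sourceorder$, and $\conflictorder$ all go strictly forward in time: $\storeorder$ follows commit times on each address; $\sourceorder$ goes from a committed store to a later load; and $\ell \conflictorder s$ requires that $\ell$ read the value overwritten by $s$, so $\ell$ precedes $s$'s commit. The only edges of $\happensbefore$ that can run backward in time are $\progorder$-edges of a thread that issued a store before a later load but committed it afterwards, and such an edge exists precisely when that thread delays a store. Any cycle in $\happensbefore$ must therefore contain at least one backward $\progorder$-edge; after relabelling, assume $t$ contributes one such edge to $C$.

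The surgery constructs $\tau'$ from $\tau$ by flushing $t'$'s store buffer eagerly while preserving $t$'s actions and the values they read. For each delayed store $s'$ of $t'$ with issue at time $i$ and commit at time $j>i+1$, shift $s'$'s issue forward to the position immediately after the last $t$-load that originally lay in the window $(i,j)$, and place $s'$'s commit right after this new issue; the $t'$-actions swept along by the shift are inserted just before the new issue in their original relative order. This shift preserves $t'$'s program order and yields a legal TSO run; $t$'s loads see the same values as in $\tau$ because $s'$ still commits after every $t$-load that originally read the old value, and $t'$'s loads that previously forwarded $s'$'s value from the buffer now read the same value from memory.

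The main obstacle is showing that $\traceof{\tau'}$ retains a happens-before cycle. Edges of $C$ internal to $t$—including its backward $\progorder$-edge—and edges connecting $t$ to untouched stores survive verbatim. Edges incident to $s'$ retain their node endpoints, and by case analysis on their kind ($\sourceorder$, $\conflictorder$, or $\storeorder$) each can be rematched in $\traceof{\tau'}$, using that the address-value correspondence at $t$-nodes is preserved and that $\storeorder$ involving $s'$ can only tighten. Composing the rematched edges with $t$'s preserved backward $\progorder$-edge yields a cycle in $\traceof{\tau'}$. Iterating the surgery over all delayed stores of every thread other than $t$ produces a violation with strictly fewer delays than $\tau$, contradicting its minimality.
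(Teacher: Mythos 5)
Your high-level strategy (contradiction with minimality by producing a violating computation with strictly fewer delays) is the same as the paper's, but the mechanism you propose does not go through, and it breaks exactly where the paper invests most of its work. Your surgery shifts the issue of each delayed store $s'$ of $t'$ forward and sweeps along the $t'$-actions lying between the old and new positions --- in particular the loads of $t'$ that $s'$ had overstepped. These loads are thereby moved past actions of \emph{other} threads (commits of $t$ and of third threads), and nothing guarantees they still read the same values; if they do not, $\tau'$ need not be a computation of $\aprog$ at all, since the control flow of $t'$ can diverge. Symmetrically, committing $s'$ earlier can \emph{reverse} its $\storeorder$ position relative to stores of third threads that committed inside the window, and turns $\conflictorder$ edges from third-thread loads in that window into $\sourceorder$ edges (or changes the values those loads return). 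So the assertions that store order ``can only tighten'' and that every edge incident to $s'$ ``can be rematched'' are false in general: your argument only protects the loads of $t$, but with three or more threads (or even with $t'$'s own memory loads) the trace of $\tau'$ can lose the cycle or fail to exist.

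The paper avoids this by never commuting an action past actions it might interact with without justification. It first proves Lemma~\ref{Lemma:WritesDelayPastReads} (every delayed store in a minimal violation oversteps a load of its own thread that did not read it early) and Lemma~\ref{Lemma:Duality} (in a minimal violation, any two actions not related by $\happensbefore^+$ can be swapped without changing the trace); the latter, combined with minimality, shows that \emph{each} delaying thread $\athread_i$ contributes its own cycle $\storeact_i\progorder^+\loadact_i\happensbefore^+\storeact_i$ --- an insight your proof does not use, since you try to preserve one global cycle $C$. The paper then takes the overall last delayed stores of two threads, distinguishes the three possible interleavings of the four key actions, and in each case \emph{deletes} a suffix and/or the program-order-last load of one thread; deletion trivially preserves executability, and the case analysis shows the other thread's cycle survives, yielding a violation with fewer delays. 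To salvage your approach you would need analogues of Lemma~\ref{Lemma:StabilityUnderInsertion} and Lemma~\ref{Lemma:Duality} to justify each commutation performed by your flush, at which point you would essentially have reconstructed the paper's proof.
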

Consider the computation in Figure~\ref{Figure:DekkerTsoComputation}.
It relies on a single delay in thread $\athread_1$ and, indeed, is a minimal violation.
As predicted by the lemma, the second thread writes to its buffer and immediately flushes it.

\section{Attacks on TSO Robustness}\label{Section:Attacks}
Our approach to checking TSO robustness combines two insights.
We first rephrase robustness in terms of a simpler problem: the absence of
feasible attacks.
We then devise an algorithm that checks attacks for feasibility.
Interestingly, SC reachability techniques are sufficient for this purpose.
Together, this yields a sound and complete reduction of TSO robustness to SC reachability.

The notion of attacks is inspired by the shape of minimal violations.
We show that if a program is not robust, then there are violations of the form shown in Figure~\ref{Figure:ShapeViolation}:
one thread, the \emph{attacker}, delays a store action $\attackstore$ past a later load action $\attackload$ in order to break robustness.
The remaining threads become \emph{helpers} and provide a happens-before path from $\attackload$ to $\attackstore$.
This yields a happens-before cycle and shows non-robustness.

Thread, store instruction $\storeinst$ of $\attackstore$, and load instruction $\loadinst$ of $\attackload$ are syntactic objects.
The idea of our approach is to fix these three parameters, the \emph{attack}, prior to the analysis.
The algorithm then tries to find a witness computation that proves the attack feasible.
\begin{figure}[t]
\centering
\begin{tikzpicture}[node distance=1.8cm]

  \node(inv){$\tau =$ };
  \node(isu)[right of=inv,node distance=1.3cm,color=red]{$\issueact_{\attackstore}$};
  \node(ld)[right of=isu,node distance=1.8cm,color=red] {$\attackload$};
  \node(st) [right of=ld,color=red] {$\attackstore$};
  \node(b1) [right of=isu,node distance=1.2cm] {};
  \node(b2) [right of=st,node distance=2.5cm] {};

  \draw[-, very thick] (inv)edge node[below]{$\tau_1$} (isu);
  \draw[-, very thick] (isu)edge node[below]{$\tau_2$} (ld);
  \draw[-, very thick,color=blue] (ld)edge node[below,color=black]{$\tau_3$}(st);
  \draw[-, very thick,style=dashed,color=red] (st)edge node[below,color=black]{$\tau_4$}(b2);
  \draw[-,thick] (isu.north east)edge[out=25,in=155](st.north west);
  \draw[-,thick] (b1.north)edge[out=20,in=160](b2.north);
\end{tikzpicture}
\caption{TSO witness for the attack $(\attacker, \storeinst, \loadinst)$.
It satisfies the following constraints. \wita\ Only the attacker delays stores. \witb\ Store $\attackstore$ is an instance of $\storeinst$. It is the first store of the attacker that is delayed. Load $\attackload$ is an instance of $\loadinst$. It is the last action of the attacker that is overstepped by $\attackstore$. So $\tau_2$ contains loads, assignments, asserts, and issues, but no fences and stores of the attacker. It may contain arbitrary helper actions.
\witc\ We require $\attackload\happensbefore^+\anact$ for every action $\anact$ in $\attackload\cdot\tau_3\cdot\attackstore$. An issue + store of a helper is counted as one action $\anact$.
\witd\ Sequence $\tau_4$ only consists of stores of the attacker that were issued before $\attackload$ and that have been delayed.
\wite\ All these stores $\storeact$ satisfy $\adrof{\storeact}\neq \adrof{\attackload}$, i.e., $\attackload$ has not read its value early.}
\label{Figure:ShapeViolation}
\end{figure}

\begin{definition}\label{Definition:Attacks}
An \emph{attack} $\attack=(\attacker,\storeinst,\loadinst)$ consists of a thread $\attacker\in\threaddom$ called \emph{attacker}, a store instruction $\storeinst$ and a load instruction $\loadinst$.
A \emph{TSO witness for $A$} is a computation of the form in  Figure~\ref{Figure:ShapeViolation}, i.e., it satisfies \wita\ to \wite.
If a TSO witness exists, the attack is called \emph{feasible}.
\end{definition}
In Dekker's algorithm, there is an attack $\attack=(\attacker, \storeinst, \loadinst)$
with $\attacker = \athread_1$, $\storeinst$ the store at $l_0$, and $\loadinst$ the load at $l_1$.
A TSO witness of this attack is the computation $\tau$ from Figure~\ref{Figure:DekkerTsoComputation}. With reference to Figure~\ref{Figure:ShapeViolation} we have
$\tau_1=\varepsilon$,
$\issueact_{\attackstore}=(\athread_1,\issueact)$,
$\tau_2=\varepsilon$,
$\attackload=(\athread_1,\loadact,\yaddr,0)$,
$\tau_3=(\athread_2,\issueact)\cdot(\athread_2,\storeact,\yaddr,1)\cdot(\athread_2,\loadact,\xaddr,0)$,
$\attackstore=(\athread_1,\storeact,\xaddr,1)$,
$\tau_4=\varepsilon$.
The program also contains a symmetric attack $\attack'$ with $\athread_2$ as the attacker.

Although TSO witnesses are quite restrictive computations, robustness can be reduced to verifying that no attack has a TSO witness. 
\begin{theorem}[Complete Characterization of Robustness with Attacks]\label{Theorem:CharacterizationRobustness}
Program $\aprog$ is robust iff no attack is feasible, i.e., no attack admits a TSO witness.
\end{theorem}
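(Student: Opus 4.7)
I would split the proof along the iff. The backward direction---feasibility implies non-robustness---is almost immediate from Lemma~\ref{Lemma:CharacterisationViolationHappensBefore}. Given a TSO witness $\tau$ for $(\attacker,\storeinst,\loadinst)$, instantiating condition \witc\ with $\anact=\attackstore$ yields $\attackload\happensbefore^+\attackstore$, while the attacker thread issues $\attackstore$ inside $\tau_1$ and only later reaches $\attackload$, so $\attackstore\progorder\attackload$. The two arrows form a happens-before cycle, and Lemma~\ref{Lemma:CharacterisationViolationHappensBefore} gives $\traceof{\tau}\notin\sctracesof{\aprog}$, i.e., $\aprog$ is not robust.

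For the forward direction, assume $\aprog$ is not robust, pick a violating TSO computation, and sharpen it to a minimal violation $\tau$ (one with the smallest total number of delayed stores). Lemma~\ref{Lemma:Locality} then localises every delay in a single thread $\attacker$, and since $\tau$ is violating at least one delay must exist. Let $\attackstore$ be the first delayed store of $\attacker$ in issue order, and let $\storeinst$ be the instruction that issued it; set $\tau_1$ to be the prefix ending just before $\issueact_{\attackstore}$. FIFO of $\attacker$'s buffer together with first-delayed-ness of $\attackstore$ forbid any attacker store action between $\issueact_{\attackstore}$ and $\attackstore$, and the buffer being non-empty there forbids any attacker fence as well. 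Hence the attacker actions in this window are only loads, later issues, and local computations.

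Next I would pin down $\attackload$ as the last attacker action strictly before $\attackstore$ after a trace-preserving normalisation: any trailing local action of the attacker can be commuted past $\attackstore$ without altering $\traceof{\tau}$, and if after these swaps no attacker action remained in the window then the delay of $\attackstore$ could simply be cancelled, yielding a violating computation with fewer delays and contradicting minimality. Hence $\attackload$ exists and is a load; let $\loadinst$ be its issuing instruction. Splitting $\tau$ at $\attackload$ and $\attackstore$ yields the shape of Figure~\ref{Figure:ShapeViolation} satisfying \wita\ and \witb\ by construction, and the suffix $\tau_4$ collects the attacker's remaining buffered stores, all of which must have been issued before $\attackload$ (otherwise they would sit behind them in FIFO order and fail to flush), giving \witd.

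The main obstacle is \witc---that every action along $\attackload\cdot\tau_3\cdot\attackstore$ is happens-before-reachable from $\attackload$---together with its companion \wite. I would prove both by the usual swap argument against minimality: were some $\anact$ in $\tau_3\cup\{\attackstore\}$ unreachable from $\attackload$ via happens-before, then no dependency forces $\anact$ to stay after $\attackload$, so one could commute $\anact$ (and the cluster of actions it pulls along) to the left of $\attackload$, preserving both TSO validity and the happens-before cycle that witnesses the violation; iterating either empties the window enough to undo the delay of $\attackstore$ or yields an equally violating computation with strictly fewer delays, each contradicting minimality of $\tau$. Property \wite\ falls to the same strategy: a delayed store in $\tau_4$ to $\adrof{\attackload}$ issued before $\attackload$ would force $\attackload$ to read from its own buffer, at which point the load could be served directly by that buffered store without any delay being needed, again contradicting minimality. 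Verifying that these commutations do preserve TSO validity and the happens-before cycle is the most delicate part; once that is in hand, the produced $\tau$ meets all of \wita-\wite\ and is therefore a TSO witness for the attack $(\attacker,\storeinst,\loadinst)$, which is consequently feasible.
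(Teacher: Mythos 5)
Your proof follows essentially the same route as the paper's: the backward direction via the happens-before cycle $\attackstore\progorder^+\attackload\happensbefore^+\attackstore$ together with Lemma~\ref{Lemma:CharacterisationViolationHappensBefore}, and the forward direction via a minimal violation, the Locality lemma, identification of the first delayed store and the last overstepped load, and swap arguments establishing \witc--\wite; you additionally reprove inline some facts the paper delegates to its appendix lemmas. One small correction on \witc: a helper action $\anact$ in $\tau_3$ with $\attackload\not\happensbefore^+\anact$ does \emph{not} by itself contradict minimality, since commuting helper actions leaves the delay count unchanged --- the correct conclusion there is a trace- and delay-preserving normalisation that produces the witness, with a genuine contradiction arising only when $\attackload\not\happensbefore^+\attackstore$ (so the delay itself can be cancelled), which is how the paper argues and how your closing sentence should be read.
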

\begin{proof}
The existence of a TSO witness implies non-robustness of the program.
Indeed, a TSO witness comes with a happens-before cycle $\attackstore\progorder^+\attackload\happensbefore^+\attackstore$.
We argue that also the reverse holds: if a program is not robust, there is a feasible attack.
Assume $\aprog$ is not robust.
We construct a TSO witness computation.
Among the violating computations, we select $\tau\in\tsocomputof{\aprog}$ where the number of delays is minimal.
The computation need not be unique.
By Lemma~\ref{Lemma:Locality}, in $\tau$ only one thread $\attacker$ uses its buffer and \wita\ holds.
We elaborate on the shape of $\tau$.

Initially, the attacker executes under SC so that stores
immediately follow their issues.
This computation is embedded into $\tau_1$ in Figure~\ref{Figure:ShapeViolation}.
Eventually, the attacker starts delaying stores.
Let $\attackstore$ be the first store that is delayed.
It gets reordered past several loads, the last of which being $\attackload$.
This shows \witb.

The helper actions in $\tau_3$ are depicted in blue in
Figure~\ref{Figure:ShapeViolation}.
To see that we can assume \witc, first note that $\attackload\happensbefore^+ \attackstore$ holds.
If there was no such path, $\attackstore$ could be placed before $\attackload$ without changing
the trace.
This would save a delay, in contradiction to minimality of $\tau$.
Assume $\tau_3=\tau_3'\cdot \anact\cdot \tau_3''$ where $\anact$ is the first action so that $\attackload\not\happensbefore^+\anact$.
Then $\anact$ is independent from all actions in $\attackload\cdot\tau_3'$.
We find a computation with the same trace where $\anact$ is placed before $\attackload$.

With cycle $\attackstore\progorder^+\attackload\happensbefore^+\attackstore$,  computation $\tau_4$ only needs to contain the stores of the attacker that have been delayed past $\attackload$.
Since these stores are non-blocking, the helpers can stop with the last action in $\tau_3$.
We can moreover assume $\attackload$ to be the program
order last action of the attacker.
\witd\ holds.

We now argue that $\attackload$ has not read its value early from any of the delayed stores, \wite.
Towards a contradiction, assume $\attackload$ obtained its value from $\storeact$ in $\tau_4=\tau_{41}\cdot \storeact\cdot \tau_{42}$.
There is a computation $\tau'$ where we avoid the early read: it replaces $\tau_4$ by $\tau_{41}\cdot \storeact\cdot\attackload\cdot \tau_{42}$.
The traces of $\tau$ and $\tau'$ coincide, but $\tau'$ saves the delay of $\storeact$ past $\attackload$.
A contradiction to minimality.

It is readily checked that $\tau$ is a TSO witness for the attack $(\attacker, \storeinst, \loadinst)$ where $\storeinst$ and $\loadinst$ are the instructions that
$\attackstore$ and $\attackload$ are derived from. \qed
\end{proof}
Since the number of attacks is only quadratic in the size of the program, we
can just enumerate them and check whether one admits a TSO
witness.
To check whether a witness exists, we employ the instrumentation described in the
following section.

\section{Instrumentation}\label{Section:Reduction}
Consider program $\aprog$ with attack $\attack = (\attacker, \storeinst, \loadinst)$.
TSO witnesses for $\attack$ only make limited use of buffers, to an extent that allows us to characterize them by SC computations in a program $\aprog_\attack$ that is \emph{instrumented for attack $\attack$}.
By instrumentation we mean that $\aprog_{\attack}$ replaces every thread by a modified version.
Capturing TSO witnesses with a program that runs under SC is difficult for two reasons.
First, TSO has unbounded store buffers which can delay stores arbitrarily long.
Second, the happens-before dependence that the helpers create may involve an arbitrary number of actions.
Our instrumentation copes with both problems using the following tricks.

To handle store buffering, we instrument the attacker thread (Section~\ref{Section:Attacker}).
Essentially, we emulate store buffering under SC using auxiliary addresses.
To explain the idea, consider the TSO witness in Figure~\ref{Figure:ShapeViolation}.
When the instrumented attacker executes the delayed stores  $\attackstore\cdot \tau_4$ under SC, they occur right behind their issue actions.
To mimic store buffering, these stores now access auxiliary addresses that the other threads do not load.
As a result, the stores remain invisible to the helpers.
This is as intended: the delayed stores $\attackstore\cdot \tau_4$ in Figure~\ref{Figure:ShapeViolation} are also never accessed by helper threads.
But how many auxiliary addresses do we need to faithfully simulate buffers?
It is sufficient to have \emph{a single auxiliary address} per address in the program.
The reason is that a load always reads the most recent store to its address that is held in the buffer.

To build up a happens-before path from $\attackload$ to $\attackstore$, we instrument the helper threads (Section~\ref{Section:Helpers}).
The question is how to decide whether a new action $\anact$ is in happens-before relation with an earlier action $\anact'$ so that $\attackload\happensbefore^*\anact'\happensbefore^*\anact$.
What is the information we need about the earlier actions in order to append $\anact$?
It is sufficient to know two facts.
Has the thread already contributed an action $\anact'$?
This information ensures $\anact'\progorder^*\anact$, and can be kept in the control flow of the thread.
Moreover, we keep track of whether the path contains a load or store access to the address $\adrof{\anact}$.
If there was a load access $\anact'=\loadact$, we can add a store $\anact=\storeact$ and get $\loadact\happensbefore^*\storeact$.
If there was a store, we are free to add a load or a store.
Hence, we need \emph{one auxiliary address} per address in the program for this access information: no access, load access, store access.

Consider the TSO witness for Dekker given in Figure~\ref{Figure:DekkerTsoComputation}.
Instead of buffering $\redtext{(\athread_1,\storeact,\xaddr,1)}$, the instrumentation immediately executes the store after its issue action.
But instead of address $\xaddr$, the action accesses the auxiliary address $(\xaddr, \delay)$ that the other threads do not load.
To indicate that this store is invisible to the helper threads, we depict it in gray.
So, the SC computation of the instrumented program roughly looks like this:
\begin{align*}
\graytext{(\athread_1,\issueact)}
\cdot \graytext{(\athread_1,\storeact,(\xaddr,d),1)}
\cdot \redtext{(\athread_1,\loadact,\yaddr,0)}
\overset{(1)}{\cdot} \bluetext{(\athread_2,\issueact)}
\bluetext{(\athread_2,\storeact,\yaddr,1)}
\overset{(2)}{\cdot} \bluetext{(\athread_2,\loadact,\xaddr,0)}.
\end{align*}
At moment (1), we know that there has been a load access to address $\yaddr$.
At moment (2), address $\yaddr$ has even seen a store.
At the end of the computation, address $\yaddr$ has seen a store and address $\xaddr$ has seen a load.

The store of $\athread_2$ can be appended since it is in happens-before relation with the attacker's load.
The following load can be added as $\athread_2$ has contributed the previous store.
The search terminates here since the helper's load accesses address $\xaddr$ that was used by the store from the attack.
\subsection{Instrumentation of the Attacker}\label{Section:Attacker}
The instrumentation emulates the buffering of stores in a TSO witness (Figure~\ref{Figure:ShapeViolation}).
Starting from $\attackstore$, the stores are replaced by stores $\attacklocalstore$ to auxiliary addresses $(\anaddr, \delay)$ that are only visible to the attacker.
As long as $\anaddr$ has not been written, $(\anaddr, \delay)$ holds the initial value $0$.
Once the attacker stores $\aval$ into $\anaddr$, we set $\themem{(\anaddr, \delay)}=(\aval, \delay)$.
In this way, $(\anaddr, \delay)$ always holds the most recent store.
A load $\theload{\areg}{\anaddr}$ of the attacker reads a value $\aval$ from the buffer whenever $\themem{(\anaddr, \delay)}=(\aval, \delay)$; otherwise $\themem{(\anaddr, \delay)}=0$ and the load obtains the value $\aval=\themem{\anaddr}$ from memory.
We turn to the translation.

Let $\attacker$ declare registers $\areg^*$,
have initial location $\alab_{0}$, and define instructions
$\langle linst\rangle^*$ that contain $\storeinst$ and $\loadinst$ from the attack.
The instrumentation is
\begin{align*}
\sem{\attacker}:=\thethreadnoinst{\tilde\attacker&}{\areg^*}{\ \alab_0\
}\\
&\theinstructions{\
  \langle linst\rangle^*\ \semattackermove{\storeinst}\ \semattackermove{\loadinst}\ \semattacker{\langle linst\rangle}^*\
}.
\end{align*}
It introduces a copy of the source code $\semattacker{\langle linst\rangle}^*$ where the stores are replaced by accesses to auxiliary addresses.
To move to the code copy, the attacker uses an instrumented version
$\semattackermove{\storeinst}$ of $\storeinst$.
\begin{figure}[ht]
\begin{eqnarray}
\semattackermove{\thetransition{\alab_1}{\alab_2}{\thestore{e_1}{e_2}}} &:=&\thetransition{\alab_1}{\tilde\alab_{x}}{\thestore{(e_1,\delay)}{(e_2,\delay)}}\label{Equation:StoreInst}\\
&&\thetransition{\tilde\alab_{x}}{\tilde\alab_2}{\thestore{a_{\attackstore}}{e_1}}\notag\\[1mm]
\semattackermove{\thetransition{\alab_1}{\alab_2}{\theload{\areg}{\anexpr}}}&:=&
\thetransition{\tilde\alab_{1}}{\tilde\alab_{x1}}{\thecondition{\themem{(e,\delay)} = 0}}\label{Equation:LoadInst}\\
&&\thetransition{\tilde\alab_{x1}}{\tilde\alab_{x2}}{\thestore{\hb}{\mytrue}}\notag\\
&&\thetransition{\tilde\alab_{x2}}{\tilde\alab_{x3}}{\thestore{(\anexpr,\hb)}{\loadacc}}\notag\\[1mm]
\semattacker{\thetransition{\alab_1}{\alab_2}{\thestore{e_1}{ e_2}}} &:=& \thetransition{\tilde\alab_1}{\tilde\alab_2}{\thestore{(e_1,\delay)}{(e_2,\delay)}}\label{Equation:Store}\\[1mm]
\semattacker{\thetransition{\alab_1}{\alab_2}{\theload{\areg}{\anexpr}}} &:=&
\thetransition{\tilde\alab_{1}}{\tilde\alab_{x1}}{\thecondition{\themem{(\anexpr,\delay)} = 0}}\label{Equation:Load}\\
&&\thetransition{\tilde\alab_{x1}}{\tilde\alab_{2}}{\theload{\areg}{\anexpr}}\notag\\
&&\thetransition{\tilde\alab_{1}}{\tilde\alab_{x2}}{\thecondition{\themem{(\anexpr,\delay)} \neq 0}}\notag\\
&&\thetransition{\tilde\alab_{x2}}{\tilde\alab_{2}}{\theload{(\areg,\delay)}{(\anexpr,\delay)}}\notag\\[1mm]
\semattacker{\thetransition{\alab_1}{\alab_2}{\mathit{local}}}&:=& \thetransition{\tilde\alab_1}{\tilde\alab_2}{\mathit{local}}\label{Equation:Local}\\[1mm]
\semattacker{\thetransition{\alab_1}{\alab_2}{\themfence}} &:=&\label{Equation:Fence}
\end{eqnarray}
\caption{Instrumentation of the attacker.}
\label{Figure:TranslationAttacker}
\end{figure}

The translation of instructions is defined in Figure~\ref{Figure:TranslationAttacker}.
We make a few remarks.
The instrumentation of $\storeinst = \thetransition{\alab_1}{\alab_2}{\thestore{e_1}{e_2}}$
keeps the address used in the store in a fresh address
$\anaddr_{\attackstore}$.
For the sake of readability, in Equation~\eqref{Equation:Load} we use memory accesses in instructions other than load and store.
Equation~\eqref{Equation:Fence} deletes fences, as they forbid to delay $\attackstore$ over $\attackload$.
Let $\loadinst=\thetransition{\alab_1}{\alab_2}{\theload{\areg}{\anexpr}}$ be the load used in the attack.
Equation~\eqref{Equation:LoadInst} checks that the load has not read its value early and sets an auxiliary happens-before address $(\anexpr,\hb)$ to access level load, $\loadacc$.
We postpone the definition of access levels until the translation of helpers.
It also sets $\hb$ flag for helpers to indicate that they cannot execute actions not contributing to the happens-before path.
Figure~\ref{Figure:DekkerAttackerInstrumentation} illustrates the instrumentation on our running example.
\begin{figure}[t]
\begin{minipage}[t]{0.5\textwidth}
\begin{alltt}
\thethread{\tilde{\athread_1}}{r_1}{l_0}{
/* Original code */
\thetransition{l_0}{l_1}{\thestore{\xaddr}{1}}
\thetransition{l_1}{l_2}{\theload{r_1}{\yaddr}}

/* Instrumented stinst */
\thetransition{l_0}{\tilde\alab_{x}}{\thestore{(\xaddr,\delay)}{(1,\delay)}}
\thetransition{\tilde\alab_{x}}{\tilde\alab_1}{\thestore{a_{\attackstore}}{\xaddr}}

/* Instrumented ldinst */
\thetransition{\tilde\alab_{1}}{\tilde\alab_{x1}}{\thecondition{\themem{(\yaddr,\delay)} = 0}}
\thetransition{\tilde\alab_{x1}}{\tilde\alab_{x2}}{\thestore{\hb}{\mytrue}}
\thetransition{\tilde\alab_{x2}}{\tilde\alab_{x3}}{\thestore{(\yaddr,\hb)}{\loadacc}}
}
\end{alltt}
\end{minipage}
\begin{minipage}[t]{0.5\textwidth}
\begin{alltt}

/* Instrumented copy of the store */
\thetransition{\tilde\alab_0}{\tilde\alab_1}{\thestore{(\xaddr,\delay)}{(1,\delay)}}

/* Instrumented copy of the load */
\thetransition{\tilde\alab_{1}}{\tilde\alab_{x4}}{\thecondition{\themem{(\yaddr,\delay)} = 0}}
\thetransition{\tilde\alab_{x4}}{\tilde\alab_{2}}{\theload{\areg}{\yaddr}}
\thetransition{\tilde\alab_{1}}{\tilde\alab_{x5}}{\thecondition{\themem{(\yaddr,\delay)} \neq 0}}
\thetransition{\tilde\alab_{x5}}{\tilde\alab_{2}}{\theload{(\areg,\delay)}{(\yaddr,\delay)}}

\end{alltt}
\end{minipage}

\caption{Attacker instrumentation of thread $\athread_1$ in Dekker from Figure~\ref{Figure:Dekker}.
The attack's store is the store at label $l_0$, the load is the load at label $l_1$.}
\label{Figure:DekkerAttackerInstrumentation}
\end{figure}

\subsection{Instrumentation of Helpers}\label{Section:Helpers}
In TSO witnesses, by \witc, all helper actions after $\attackload$ are in happens-before relation with $\attackload$.
To ensure this, we make use of Lemma~\ref{Lemma:Access}.
The proof from left to right is by definition of happens before.
For the reverse direction, note that happens-before is \emph{stable under insertion}.
Consider $\storeact\sourceorder \loadact$.
A happens-before relation remains valid in any computation that places actions between $\storeact$ and $\loadact$.
\begin{lemma}\label{Lemma:Access}
Consider $\tau = \tau_1\cdot\anact_1\cdot\tau_2\in\sccomputof{\aprog}$ where for all $\anact_2$ in $\tau_2$ we have $\anact_1\happensbefore^* \anact_2$.
Computation $\tau\cdot \anact$ satisfies $\anact_1 \happensbefore^* \anact$ iff
\setlist{nolistsep}
\begin{description}
\item[(i)] there is an action $\anact_2$ in $\anact_1\cdot \tau_2$ with $\threadof{\anact_2}=\threadof{\anact}$ or
\item[(ii)] $\anact$ is a load whose address is stored in $\anact_1\cdot \tau_2$ or
\item[(iii)] $\anact$ is a store (with issue) whose address is loaded or stored in $\anact_1\cdot \tau_2$.
\end{description}
\setlist{listsep}
\end{lemma}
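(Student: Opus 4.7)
\emph{Plan.} The argument rests on two basic facts about SC traces. \emph{(a)}~Every happens-before edge $\anact\happensbefore\anact'$ in an SC computation is consistent with the execution order, so $\anact$ precedes $\anact'$; this is a routine four-case check, the only non-immediate one being $\conflictorder$, which follows because a load under SC reads from the most recent same-address store. \emph{(b)}~Appending a new action $\anact$ at the end of an SC computation $\tau$ preserves every $\progorder$, $\storeorder$, $\sourceorder$, and $\conflictorder$ edge among actions already in $\tau$; it can only add new edges whose target is $\anact$. This is the ``stability under insertion'' mentioned in the hint.

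\emph{Left-to-right.} Assume $\anact_1\happensbefore^+\anact$ in $\tau\cdot\anact$ and look at the last edge $\anact_2\happensbefore\anact$ of some witnessing path. The prefix $\anact_1\happensbefore^*\anact_2$ lies entirely inside $\tau$, so fact~(a) places $\anact_2$ at or after $\anact_1$, i.e.\ $\anact_2\in\anact_1\cdot\tau_2$. A case split on the type of the last edge now delivers the three alternatives: $\progorder$ yields $\threadof{\anact_2}=\threadof{\anact}$ (case~(i)); $\sourceorder$ forces $\anact_2$ to be a store and $\anact$ a load with $\adrof{\anact_2}=\adrof{\anact}$ (case~(ii)); $\storeorder$ (respectively $\conflictorder$) forces $\anact$ to be a store and $\anact_2$ to be a store (respectively a load) of the same address (case~(iii)).

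\emph{Right-to-left.} For each of (i), (ii), (iii) I would exhibit a single happens-before edge from some $\anact_2\in\anact_1\cdot\tau_2$ into $\anact$, and compose it with the chain $\anact_1\happensbefore^*\anact_2$---trivial when $\anact_2=\anact_1$, and otherwise supplied by the hypothesis on $\tau_2$ and transferred from $\tau$ to $\tau\cdot\anact$ by fact~(b). For (i), take $\anact_2\progorder\anact$; for (iii), take $\anact_2\storeorder\anact$ when $\anact_2$ is a store and $\anact_2\conflictorder\anact$ when $\anact_2$ is a load (the latter holds because $\anact$ is a later store to the same address as the source read by $\anact_2$). Case~(ii) needs one extra hop because the source of the new load $\anact$ need not be the particular $\anact_2$ on hand: letting $s$ be the latest store to $\adrof{\anact}$ in $\tau$, one still has $\anact_2\storeorder^* s$ and $s\sourceorder\anact$, whence $\anact_2\happensbefore^*\anact$.

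\emph{Main obstacle.} Most of the work is bookkeeping over the four relations. The genuinely delicate point is the forwardness check for $\conflictorder$, which is the only edge in $\happensbefore$ whose direction is not visible from its plain definition, together with the source-shift argument in case~(ii), which must invoke store-order completion when the $\anact_2$ witnessing~(ii) is not itself the source of the appended load $\anact$.
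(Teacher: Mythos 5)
Your proof is correct and follows essentially the same route as the paper, which only sketches the argument (left-to-right ``by definition of happens-before'', right-to-left via stability of happens-before under insertion); your facts~(a) and~(b) and the case analysis on the last edge are exactly the details the paper leaves implicit. The one point you gloss --- that in case~(iii) with $\anact_2$ a load, the conflict edge $\anact_2\conflictorder\anact$ also covers the situation where $\anact_2$ reads the initial value --- is handled explicitly by the paper's definition of $\conflictorder$, so no gap remains.
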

The lemma suggests the following instrumentation.
For every helper $\athread$, we track whether it has executed an action that depends on $\attackload$.
The idea is to use the control flow.
Upon detection of this first action, the thread moves to a copy of its code.
All actions from this copy stay in happens-before relation with $\attackload$.

It remains to decide whether an action $\anact$ allows a thread to move to the code copy.
According to Lemma~\ref{Lemma:Access}, this depends on the earlier accesses to the address $\anaddr=\adrof{\anact}$.
We introduce auxiliary \emph{happens-before addresses} $(\anaddr, \hb)$ that provide this access information.
The addresses $(\anaddr, \hb)$ range over the domain $\{\noacc, \loadacc, \storeacc\}$ of \emph{access types}.
It is sufficient to keep track of the \emph{maximal} access type wrt. the  ordering\quad $\noacc \text{ (no access)}  <\ \loadacc\ \text{ (load access)} <\ \storeacc \text{ (store access)}$.

For the definition, consider a helper thread $\athread$ that declares $\areg^*$, has initial label $\alab_0$, and defines instructions $\langle linst\rangle^*$.
The instrumented thread is
\begin{align*}
\sem{\athread}:=\thethreadnoinst{\ \tilde\athread\ &}{\ \tilde \areg, \areg^*\ }{\ \alab_0\ }\\
&\theinstructions{\ \semhelperorig{\langle{}linst\rangle^}*\  \semhelpertrans{\langle ldstinst\rangle}^*\ \semhelpercpy{\langle linst\rangle}^*\ \semhelpersuc{\langle\alab\rangle}^*\ }.
\end{align*}
Here, $\langle ldstinst\rangle^*$ is the subsequence of all load and store instructions.
Their instrumentation $\semhelpertrans{\langle ldstinst\rangle}^*$ is used to move to the code copy $\semhelpercpy{\langle linst\rangle}^*$.
Moreover, $\langle\alab\rangle^*$ are all labels used by the thread.
The additional instructions $\semhelpersuc{\langle\alab\rangle}^*$ raise a success flag when a TSO witness has been found.
$\semhelperorig{\langle linst \rangle}$ forces helpers to either enter the code copy or stop when $\hb$ flag is raised.

\begin{figure}[t]
\begin{eqnarray}
\semhelperorig{\thetransition{\alab_1}{\alab_2}{\mathit{instr}}}&:=&
\thetransition{\alab_1}{\alab_{x}}{\thecondition{\themem{\hb} = 0}}\label{Equation:HelperOriginal}\\
&&\thetransition{\alab_x}{\alab_2}{\mathit{instr}}\notag\\
\semhelpertrans{\thetransition{\alab_1}{\alab_2}{\theload{\areg}{\anexpr}}}&:=&
\thetransition{\alab_1}{\tilde \alab_{x}}{\thecondition{\themem{(\anexpr,\hb)}=\storeacc}}\label{Equation:HelperLoadMove}\\
&&\thetransition{\tilde \alab_x}{\tilde \alab_2}{\theload{\areg}{\anexpr}}\notag\\[1mm]
\semhelpertrans{\thetransition{\alab_1}{\alab_2}{\thestore{\anexpr_1}{\anexpr_2}}}&:=&
\thetransition{\alab_1}{\tilde \alab_{x1}}{\thecondition{\themem{(\anexpr,\hb))}\geq \loadacc}}\label{Equation:HelperStoreMove}\\
&&\thetransition{\tilde \alab_{x1}}{\tilde \alab_{x2}}{\thestore{\anexpr_1}{\anexpr_2}}\notag\\
&&\thetransition{\tilde \alab_{x2}}{\tilde \alab_{2}}{\thestore{(\anexpr_1,\hb)}{\storeacc}}\notag\\[1mm]
\semhelpercpy{\thetransition{\alab_1}{\alab_2}{\textit{local/mfence}}} &:=& \thetransition{\tilde \alab_1}{\tilde \alab_2}{\textit{local/mfence}}\label{Equation:HelperLocalFence}\\[1mm]
\semhelpercpy{\thetransition{\alab_1}{\alab_2}{\thestore{e_1}{e_2}}}&:=&
\thetransition{\tilde \alab_{1}}{\tilde \alab_{e}}{\thestore{ e_1}{ e_2}}\label{Equation:HelperStore}\\
&&\thetransition{\tilde \alab_{e}}{\tilde \alab_{2}}{\thestore{(e_1,\hb)}{\storeacc}}\notag\\[1mm]
\semhelpercpy{\thetransition{\alab_1}{\alab_2}{\theload{\areg}{\anexpr}}}&:=&
\thetransition{\tilde \alab_1}{\tilde \alab_{x1}}{\thelocal{\tilde \areg}{\anexpr}}\label{Equation:HelperLoad}\\
&&\thetransition{\tilde\alab_{x1}}{\tilde \alab_{x2}}{\theload{\areg}{\tilde\areg}}\notag\\
&&\thetransition{\tilde\alab_{x2}}{\tilde \alab_{2}}{\thestore{(\tilde\areg,\hb)}{\maxfun\{\loadacc, \themem{(\tilde\areg,\hb)}\}}}\notag\\
\semhelpersuc{\alab}&:=&
\thetransition{\tilde\alab}{\tilde\alab_{x1}}{\theload{\tilde r}{a_{\attackstore}}}\label{Equation:HelperSuccessCheck}\\
&&\thetransition{\tilde\alab_{x1}}{\tilde\alab_{x2}}{\theload{\tilde r}{(\tilde r,\hb)}}\notag\\
&&\thetransition{\tilde\alab_{x2}}{\tilde\alab_{x3}}{\thecondition{\tilde r \neq 0}}\notag\\
&&\thetransition{\tilde\alab_{x3}}{\tilde \alab_{x4}}{\thestore{\successvar}{\mytrue}}\notag
\end{eqnarray}
\caption{Instrumentation of helpers.\label{Figure:HelpersInstrumentation}}
\end{figure}

The translation of instructions is given in Figure~\ref{Figure:HelpersInstrumentation}.
We make some remarks.
Transitions to the code copy check the auxiliary addresses for whether the current action is in happens-before relation with $\attackload$.
Loads in Equation~\eqref{Equation:HelperLoadMove} check for an earlier store access to their address, Lemma~\ref{Lemma:Access}(ii).
Stores in Equation~\eqref{Equation:HelperStoreMove} require that the address has seen at least a load, Lemma~\ref{Lemma:Access}(iii).
They set the access level to \storeacc{}.
Loads and stores in the code copy maintain the auxiliary addresses to contain the maximal access types, Equations~\eqref{Equation:HelperLoad} and~\eqref{Equation:HelperStore}.
Note the auxiliary register $\tilde \areg$ that ensures we do not overwrite the address.
At every label of the code copy we add a check, Equation \eqref{Equation:HelperSuccessCheck}, whether the address used in the attack's store has been accessed in the code copy.
If so, a success flag is raised.

\subsection{Soundness and Completeness}
The flag indicates that the SC computation corresponds to a TSO witness,
and we call $(\pcconf, \valconf)$ with $\valconf(\successvar)=\mytrue$ a
\emph{goal configuration}.
The instrumentation thus reduces feasibility of attack $\attack$ to SC reachability of a goal configuration in program $\aprog_{\attack}$.
The instrumentation is sound and complete.
If a goal configuration is reachable, we can reconstruct a TSO witness for
the attack.
In turn, every TSO witness ensures the goal configuration is reachable.
\begin{theorem}[Soundness and Completeness]
Attack $\attack=(\attacker,\storeinst,\loadinst)$ is feasible in program $\aprog$ iff program $\aprog_{\attack}$ reaches a goal configuration under SC.
\end{theorem}
In combination with Theorem~\ref{Theorem:CharacterizationRobustness}, we
can check robustness by inspecting all $\aprog_{\attack}$.
\begin{theorem}[From Robustness to SC Reachability]\label{Theorem:RobReach}
Program $\aprog$ is robust iff no instrumentation $\aprog_{\attack}$
reaches a goal configuration under SC.
\end{theorem}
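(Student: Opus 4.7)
The plan is to derive this theorem as a direct corollary by chaining the two preceding results. There is no new combinatorial or semantic content to uncover: the heavy lifting has already been done in Theorem~\ref{Theorem:CharacterizationRobustness} (characterizing non-robustness by the existence of a feasible attack) and in the Soundness and Completeness theorem (characterizing feasibility of a single attack by SC reachability of a goal configuration in $\aprog_{\attack}$). So I would simply compose these two biconditionals.

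Concretely, I would proceed in two steps. First, I would invoke Theorem~\ref{Theorem:CharacterizationRobustness} to rewrite ``$\aprog$ is robust'' as ``no attack $\attack=(\attacker,\storeinst,\loadinst)$ of $\aprog$ is feasible.'' Since the set of attacks is finite --- quadratic in the number of load/store instructions of $\aprog$ and bounded by the number of threads --- this is a well-defined universal quantification over a syntactically determined, finite collection. Second, I would apply the Soundness and Completeness theorem pointwise to each such attack $\attack$, replacing ``$\attack$ is not feasible'' by ``$\aprog_{\attack}$ does not reach a goal configuration under SC.'' Composing the two equivalences (and universally quantifying over $\attack$) yields exactly the claim: $\aprog$ is robust iff for every attack $\attack$, the instrumented program $\aprog_{\attack}$ fails to reach a goal configuration under SC.

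What little care is needed concerns making the quantification explicit and confirming that the two preceding theorems range over the same notion of attack. The Characterization theorem speaks of attacks as triples $(\attacker,\storeinst,\loadinst)$, and the instrumentation in Section~\ref{Section:Reduction} is defined parametrically in precisely such a triple, so the match is immediate. I would therefore state the theorem as a one-line corollary and, if space permits, add a brief remark that this reduction turns robustness checking into a quadratic number of independent SC state-reachability queries on linear-size instrumented programs --- queries that may be dispatched to any existing SC reachability engine and, crucially, parallelized.

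I do not anticipate a genuine technical obstacle in writing this proof; the whole point of the two preceding theorems is to make this step trivial. The real intellectual content lies upstream: in the minimality/locality argument used to prove Theorem~\ref{Theorem:CharacterizationRobustness}, and in the buffer- and happens-before-tracking instrumentation whose correctness the Soundness and Completeness theorem establishes.
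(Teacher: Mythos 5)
Your proposal matches the paper exactly: the paper presents Theorem~\ref{Theorem:RobReach} as an immediate corollary obtained by combining Theorem~\ref{Theorem:CharacterizationRobustness} (robustness iff no feasible attack) with the Soundness and Completeness theorem (attack feasibility iff SC reachability of a goal configuration in $\aprog_{\attack}$), quantified over the finitely many attacks. Your composition of the two biconditionals is precisely the intended argument, and your observations about the quadratic number of independent, parallelizable reachability queries echo the paper's own remarks.
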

The instrumentation we provide is linear in size.
Then, it follows from Theorem~\ref{Theorem:RobReach} that checking robustness for programs over finite data domains is in \pspace.
The problem is actually \pspace-complete due to the lower
bound in \cite{BMM11}.

\section{Robustness for Parameterized Programs}\label{Section:Parameterization}
We extend the study of robustness to \emph{parameterized programs}.
A parameterized program represents an infinite family of instance programs that replicate the threads multiple times.
Syntactically, parameterized programs coincide with the parallel programs we introduced in Section~\ref{Section:Programs}: they have a name and declare a finite set of threads $\athread_1,\ldots, \athread_k$.
The difference is in the semantics.
A parameterized program represents a family of programs: for every vector $I=(n_1,\ldots, n_k)\in \nat^k$, a \emph{program instance} $\aprog(I)$ declares $n_i$ copies of thread $\athread_i$.

In the parameterized setting, the robustness problem asks whether all instances of a given program are robust:
\begin{quote}
{\bf Given:} A parameterized program $\aprog$.\\
{\bf Problem:} Does $\tsotracesof{\aprog(I)}=\sctracesof{\aprog(I)}$ hold for all instances $\aprog(I)$ of $\aprog$?
\end{quote}
The problem is interesting because libraries usually cannot make assumptions on the number of threads that use their functions.
They have to guarantee proper functioning for any number.

We reduce robustness for parameterized programs to a parameterized version of reachability, based on the following insight.
A parameterized program is not robust if and only if there is an instance $\aprog(I)$ that is not robust.
With Theorem~\ref{Theorem:CharacterizationRobustness}, instance $\aprog(I)$ is not robust if and only if there is an attack $\attack$ that is feasible.
With the instrumentation from Section~\ref{Section:Reduction} and Theorem~\ref{Theorem:RobReach}, this feasibility can be checked as reachability of a goal configuration in $\aprog(I)_{\attack}$.

Algorithmically, it is impossible to instrument all (infinitely many) instance programs.
Instead, the idea is to instrument directly the parameterized program $\aprog$ towards the attack $\attack$.
Using the constructions from Section~\ref{Section:Reduction}, we modify every thread and again obtain program $\aprog_{\attack}$, which is now parameterized.

Actually, for the attacker we have to be slightly more careful. 
In an instance program, only one copy of the thread should act as attacker, the remaining copies have to behave like helpers.
Therefore, the thread must be instrumented not only as an attacker, but also as a helper. 
To ensure that only one copy of the attacker delays stores, we add an additional flag variable. 
Before starting an attack, the thread checks this variable. 
If it contains the initial value, the thread sets the flag and starts delaying stores. If it has a different value, the thread continues to run sequentially. 
This check requires an atomic test-and-set operation which can be implemented on x86 by the \texttt{lock cmpxchg} instruction. 
Support for locked instructions is immediate to add to our programming model.

Modulo these two changes, the instances $\aprog_{\attack}(I)$ coincide with the instrumentations $\aprog(I)_{\attack}$.
Together with the argumentation in last two paragraphs this justifies the following theorem.
\begin{theorem}\label{Theorem:Parameter}
A parameterized program $\aprog$ is not robust iff there is an attack $\attack$ so that an instance $\aprog_{\attack}(I)$ of
program $\aprog_{\attack}$ reaches a goal configuration under SC.
\end{theorem}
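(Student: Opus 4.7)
The plan is to chain three equivalences already established in the paper. By definition of the parameterized semantics, $\aprog$ is non-robust iff there exists an index vector $I$ such that $\aprog(I)$ is non-robust as a (non-parameterized) parallel program. For a fixed instance, Theorem~\ref{Theorem:CharacterizationRobustness} gives that $\aprog(I)$ is non-robust iff some attack $\attack$ on it is feasible, and Theorem~\ref{Theorem:RobReach} in turn gives that feasibility of $\attack$ is equivalent to SC reachability of a goal configuration in the instrumented program $\aprog(I)_{\attack}$. The only remaining step is to identify SC-reachability of a goal configuration in $\aprog(I)_{\attack}$ with SC-reachability in the instance $\aprog_{\attack}(I)$ of the parameterized instrumented program.

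For the forward direction, starting from a minimal violation in $\aprog(I)$, Lemma~\ref{Lemma:Locality} guarantees that a single copy, say the $j$-th copy of $\athread_i$, delays stores. I would designate that copy as the attacker and instrument it according to Section~\ref{Section:Attacker}; the remaining copies of $\athread_i$ and all other threads get the helper instrumentation from Section~\ref{Section:Helpers}. This produces exactly the program $\aprog(I)_{\attack}$, which reaches a goal configuration by Theorem~\ref{Theorem:RobReach}. To turn this into an execution of $\aprog_{\attack}(I)$, I would use the additional flag: the chosen $j$-th copy of $\athread_i$ executes the locked test-and-set first, observes the initial value, sets the flag, and proceeds through the attacker-instrumented branch; every other copy of $\athread_i$ later reads a non-initial value and proceeds through the helper-instrumented branch. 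Because the flag check is atomic, the resulting interleaving is a valid SC computation of $\aprog_{\attack}(I)$ ending in a goal configuration.

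For the reverse direction, given an SC computation of $\aprog_{\attack}(I)$ reaching a goal configuration, the atomicity of the test-and-set forces at most one copy of $\athread_i$ to take the attacker branch; all remaining copies of $\athread_i$ and all other threads execute the helper instrumentation. Projecting away the flag variable then yields an SC computation of $\aprog(I)_{\attack}$ reaching the same goal configuration, and Theorem~\ref{Theorem:RobReach} converts this into a TSO witness for $\attack$ in $\aprog(I)$. This witness shows that the instance $\aprog(I)$ is non-robust, hence so is $\aprog$.

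The main obstacle will be making precise, and fully rigorous, the claim that $\aprog_{\attack}(I)$ and $\aprog(I)_{\attack}$ are essentially equivalent programs under SC. The subtlety is twofold: the attacker thread template must simultaneously support the attacker role and the helper role, and the global flag plus locked test-and-set must be shown to faithfully implement a nondeterministic choice of exactly one attacker copy without introducing spurious SC behaviours. I expect this to reduce to a routine simulation argument between the two transition systems, with the locked instruction's atomicity doing the key work, but one must carefully check that the helper-role code inside the attacker template does not enable behaviours that were forbidden in $\aprog(I)_{\attack}$ (where non-attacker copies were plain helper threads from the start).
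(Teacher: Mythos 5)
Your proposal follows essentially the same route as the paper: chain the equivalences "non-robust iff some instance non-robust iff some attack feasible (Theorem~\ref{Theorem:CharacterizationRobustness}) iff goal reachable in $\aprog(I)_{\attack}$ (Theorem~\ref{Theorem:RobReach})," and then identify $\aprog(I)_{\attack}$ with the instance $\aprog_{\attack}(I)$ via the atomic test-and-set flag that elects exactly one copy of the attacker template while the rest behave as helpers. The subtlety you flag at the end (the attacker template carrying both roles, and the flag faithfully implementing the nondeterministic choice) is precisely the point the paper addresses with the same flag construction, so your argument matches the paper's.
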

Reachability of a goal configuration in one instance of $\aprog_{\attack}$ can be reformulated as a coverability problem for Petri nets, which is known to be decidable \cite{Rackoff1978}.
The key observation in the reduction to Petri nets is that threads in instance programs never use their identifiers, simply because they are copies of the same source code.
This means there is no need to track the identity of threads, it is sufficient to count how many instances of a thread are in each state --- a technique known as counter abstraction \cite{GermanSistla1992}.
\begin{theorem}\label{Theorem:ParameterizedRobustnessIsDecidableRef}
Robustness for parameterized programs over finite data domains is decidable and \expspace-hard --- already for Boolean programs.
\end{theorem}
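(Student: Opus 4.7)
The decidability direction is essentially assembled from the material laid out just before the theorem. For each of the quadratically many attacks $\attack$, Theorem~\ref{Theorem:Parameter} reduces the question ``is $\aprog$ non-robust via $\attack$?'' to SC-reachability of a goal configuration in some instance of the parameterized instrumentation $\aprog_{\attack}$. Over finite data domains the instrumentation only adds finitely many auxiliary addresses of finite range and finitely many control labels per thread, so every thread of $\aprog_{\attack}$ is a finite-state system; since thread copies never use their identifiers, counter abstraction~\cite{GermanSistla1992} turns parameterized SC-reachability into coverability in a polynomial-size Petri net, which is decidable~\cite{Rackoff1978}. Taking the disjunction over all attacks gives decidability of parameterized robustness.

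For the \expspace{} lower bound the plan is to reduce from coverability in Petri nets, which is \expspace-hard by~\cite{Lipton1976}. Given a net $N$ with initial marking $m_0$ and target marking $m_t$, I would construct a parameterized Boolean program $\aprog_N$ that (i)~faithfully simulates $N$ and (ii)~exhibits a robustness violation iff $m_t$ is covered. For the simulation, use one thread type per place of $N$; tokens are thread copies parked in a distinguished control state, and each Petri-net transition is implemented by a short handshake that atomically removes tokens from input places and re-parks them in output places, using the locked test-and-set instructions already supported by the programming model. Every shared access of this simulation layer is guarded by \texttt{mfence}, so the layer is oblivious to store-buffering and behaves purely under SC. A separate ``violation gadget'' of two threads performs a Dekker-style store/load pair on fresh addresses (exactly the attack shape of Figure~\ref{Figure:ShapeViolation}), but only after reading a Boolean flag that an auxiliary thread raises the moment it observes a covering of $m_t$ in the simulation. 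By construction, the only potential feasible attack lives in the gadget, and it becomes feasible iff $m_t$ is covered. Since both the simulation and the gadget only manipulate Boolean shared variables, $\aprog_N$ is a parameterized Boolean program, yielding \expspace-hardness in that subclass.

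The main obstacle is on the hardness side, namely ensuring that the Petri-net simulation itself does not create any spurious happens-before cycles that could make $\aprog_N$ non-robust regardless of coverage. The fencing of every simulation access is the key tool: with no store of the simulation layer ever being delayed past a subsequent load, Lemma~\ref{Lemma:CharacterisationViolationHappensBefore} guarantees that the only happens-before cycles available to the program are those formed inside the violation gadget, and the gadget can only fire when the target is covered. Besides this, one has to check that the \texttt{lock cmpxchg}-based handshake correctly preserves the Petri net semantics even when interleaved with helper copies stepping through the attacker instrumentation of Section~\ref{Section:Attacker}; this is a routine but slightly tedious case analysis. The decidability direction, by contrast, is largely a packaging argument over what was already proved.
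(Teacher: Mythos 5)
Your proposal matches the paper's proof in both directions: decidability via counter abstraction of the instrumented parameterized program and Petri-net coverability (Rackoff), and \expspace-hardness by encoding Petri-net coverability (Lipton) with thread instances serving as tokens, locked handshakes implementing transitions, and a non-robust situation that becomes enabled exactly when the target marking is covered. The only cosmetic differences are that the paper uses a single token-thread type carrying one label per place (rather than one thread type per place) and lets the controller thread itself enter the non-robust situation instead of triggering a separate Dekker-style gadget.
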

For the lower bound, we in turn encode the coverability problem for Petri nets into robustness for parameterized programs \cite{Trencher,Lipton1976}

\section{Fence Insertion}\label{Section:FenceInsertion}
To ease the presentation, we return to parallel programs.
Since the algorithm only relies on a robustness checker, it carries over to the parametric setting.

Our goal is to insert a set of fences that ensure robustness of the resulting program.
By \emph{inserting a fence at label $\alab$} we mean the following modification of the program.
Introduce a fresh label $\alab_f$.
Then, translate each instruction \thetransition{\alab}{\alab'}{inst} into \thetransition{\alab_f}{\alab'}{inst}.
Finally, add an instruction \thetransition{\alab}{\alab_f}{\themfence}.

We call a set of labels $\afenceset$ in program $\aprog$ a \emph{valid fence set} if inserting fences at these labels yields a robust program.
We say that $\afenceset$ is \emph{irreducible} if no strict subset is a valid fence set.
In general, however, we would like to compute a valid fence set which is \emph{optimal} in some sense.
We pose the \emph{fence computation problem}:
\begin{quote}
{\bf Given:} A program $\aprog$ and a strictly positive \emph{cost function} $\costfun\colon\labdom\to\realnums^+$.\\
{\bf Problem:} Compute a valid fence set $\afenceset$ with $\Sigma_{\alab\in\afenceset}\costfun(\alab)$ minimal.
\end{quote}
Since we assume $\costfun$ to be strictly positive, every optimal fence set is irreducible.

We consider two criteria of optimality: minimization of program size and maximization of program performance.
By solving the problem for $\costfun\equiv 1$ we compute a fence set of minimal size, thus minimizing the code size of the fenced program.
Maximization of program performance requires minimizing the number of times memory fence instructions are executed: practical measurements~\cite{Trencher} show that it is impossible to save CPU cycles by executing more fences, but with less stores in the TSO buffer.
For this, $\costfun(\alab)$ is defined as the frequency at which instructions labeled by $\alab$ occur in executions of the original program $\aprog$.
Concrete values of $\costfun$ can be either estimated by profiling or computed by mathematical reasoning about the program.

From a complexity point of view, fence computation is at least as hard as robustness.
Indeed, robustness holds if and only if the optimal valid fence set is $\afenceset=\emptyset$.
Actually, since fence sets can be enumerated, computing an optimal valid fence set does not require more space than checking robustness.
Notice that this also holds in the parameterized case.
\begin{theorem}\label{Theorem:ComplexityOptimalFenceSet}
For programs over finite domains, fence computation is \pspace-complete.
In the parameterized case, it is decidable and \expspace-hard.
\end{theorem}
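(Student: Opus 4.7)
My plan is to handle the lower and upper bounds separately and, in both cases, leverage the complexity of the underlying robustness check.

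For the \emph{lower bounds}, I would reduce robustness checking to fence computation. Fix any strictly positive cost function $\costfun$ (for instance $\costfun \equiv 1$). The empty fence set $\emptyset$ is always a candidate, and by the definition of validity it is valid precisely when $\aprog$ is already robust. Hence the cost of any optimum is $0$ iff $\aprog$ is robust. Consequently, any algorithm that returns an optimal valid fence set decides robustness: run it and test whether the returned set is empty. This transports the \pspace-hardness of robustness in the finite-domain case (Theorem~\ref{Theorem:RobReach} together with the lower bound of \cite{BMM11}) and the \expspace-hardness in the parameterized case (Theorem~\ref{Theorem:ParameterizedRobustnessIsDecidableRef}) directly to fence computation.

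For the \emph{upper bounds}, I would use a generate-and-test search over subsets of labels. The label set $\labdom$ is finite, so there are only finitely many candidate fence sets $\afenceset \subseteq \labdom$, each representable by a polynomial-size bitvector. For every candidate, inserting fences at the labels of $\afenceset$ produces, by construction, a program of size linear in $\aprog$ with the same data domain and the same (parameterized) shape. I would invoke the robustness checker on the fenced program and, if the answer is positive, compute $\sum_{\alab \in \afenceset} \costfun(\alab)$ and compare it to the best cost recorded so far, overwriting the stored best pair whenever the new one improves it. Enumeration of subsets, storage of a single best candidate, and cost comparison all fit in polynomial space. In the finite-domain case the inner robustness query runs in \pspace\ by Theorem~\ref{Theorem:RobReach}, so the whole search stays in \pspace. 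In the parameterized case the inner query is decidable by Theorem~\ref{Theorem:ParameterizedRobustnessIsDecidableRef}, so the overall procedure is decidable.

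The main obstacle is essentially bookkeeping, and it is twofold. First, I must verify that fence insertion preserves everything the inner algorithm relies on: finiteness of the data domain is preserved because the construction introduces no new values, and the parameterized shape is preserved because fences are added to thread templates and hence uniformly to every copy in every instance, so the modified program has the same finite set of thread templates. Second, I need cost comparison to be feasible in the claimed space; this holds under the mild assumption that $\costfun$ takes rational values of polynomial bit length, or more generally that partial sums can be compared on the fly in polynomial space. With these observations the theorem follows by combining the two reductions above with Theorems~\ref{Theorem:RobReach} and~\ref{Theorem:ParameterizedRobustnessIsDecidableRef}.
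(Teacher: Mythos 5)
Your proposal is correct and follows essentially the same route as the paper: the lower bounds come from the observation that the program is robust iff the optimal valid fence set is empty (using strict positivity of the cost function), and the upper bounds come from enumerating candidate fence sets and reusing the robustness check on each fenced program, which keeps the space usage at that of a single robustness query. The extra bookkeeping you supply (preservation of finite domains and of the parameterized template structure under fence insertion, and polynomial-space cost comparison) is implicit in the paper's one-paragraph argument but entirely consistent with it.
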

In the remainder of the section, we give a practical algorithm for computing optimal valid fence sets.
\subsection{Fence Sets for Attacks}
We say that a label $\alab$ is \emph{involved in the attack} $\attack = (\attacker, \storeinst, \loadinst)$ if it belongs to some path in the control flow graph of $\attacker$ from the destination label of $\storeinst$ to the source label of $\loadinst$.
We denote the set of all such labels
by $\alabelset_{\attack}$.

We call a set of labels $\afenceset_\attack$ \emph{an eliminating fence set for attack $\attack$} if adding fences at all labels in $\afenceset_\attack$ eliminates the attack.
Dekker's algorithm has two eliminating fence sets: $\afenceset_{\attack} = \set{l_1}$ eliminates the only attack by $\athread_1$, and $\afenceset_{\attack'} = \set{l_1'}$ eliminates the only attack by $\athread_2$.
Actually, the sets are \emph{irreducible}: no strict subset eliminates the attack.
Note that any irreducible eliminating set $\afenceset_\attack$ satisfies $\afenceset_\attack\subseteq\alabelset_{\attack}$.

\begin{lemma}\label{Lemma:FenceSetAsUnion}
Every irreducible valid fence set $\afenceset$ can be represented as a union of irreducible eliminating fence sets for all feasible attacks.
\end{lemma}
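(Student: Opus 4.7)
Let $\afenceset$ be an irreducible valid fence set for $\aprog$. The plan is to construct, for every feasible attack $\attack$ of $\aprog$, an irreducible eliminating fence set $\afenceset_\attack \subseteq \afenceset$, and then to establish the equality $\afenceset = \bigcup_\attack \afenceset_\attack$, where the union ranges over all attacks $\attack$ that are feasible in $\aprog$.

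\emph{Extraction step.} The key starting observation is that $\afenceset$ itself is an eliminating fence set for \emph{every} feasible attack $\attack$ of $\aprog$: since $\afenceset$ is valid, the fenced program is robust, so by Theorem~\ref{Theorem:CharacterizationRobustness} no attack is feasible in it; in particular $\attack$ is eliminated. Starting from $\afenceset$, I would greedily drop labels as long as the remaining set still eliminates $\attack$; the result $\afenceset_\attack \subseteq \afenceset$ is irreducible by construction. Labels outside $\alabelset_\attack$ get stripped in the process, since a fence on no path from the destination of $\storeinst$ to the source of $\loadinst$ inside $\attacker$ cannot help to destroy a TSO witness of $\attack$, so $\afenceset_\attack \subseteq \alabelset_\attack$ as expected.

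\emph{Union step.} The inclusion $\bigcup_\attack \afenceset_\attack \subseteq \afenceset$ is immediate. For the reverse, suppose towards a contradiction that some $\alab \in \afenceset$ lies in no $\afenceset_\attack$. I claim $\afenceset \setminus \{\alab\}$ is itself a valid fence set, contradicting the irreducibility of $\afenceset$. To prove the claim, fix any attack $\attack'$ (the set of attacks is purely syntactic and unaffected by fence insertion). Two cases: if $\attack'$ is not feasible in $\aprog$, then it stays infeasible after inserting fences at $\afenceset \setminus \{\alab\}$, by a monotonicity argument below. If $\attack'$ is feasible in $\aprog$, then the standing assumption on $\alab$ gives $\afenceset_{\attack'} \subseteq \afenceset \setminus \{\alab\}$; since $\afenceset_{\attack'}$ eliminates $\attack'$ and any superset of an eliminating set remains eliminating (monotonicity again), so does $\afenceset \setminus \{\alab\}$. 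Hence no attack is feasible after fencing $\afenceset \setminus \{\alab\}$, which makes it a valid fence set, contradicting the irreducibility of $\afenceset$.

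\emph{Main obstacle.} The delicate point on which both cases above rely is the monotonicity of TSO behaviour under fence insertion: inserting fences can only restrict TSO computations, never add new ones. This follows from the fact that the TSO rule for \themfence{} fires only on an empty buffer, so the only effect of added fences is to prune TSO computations; in particular every TSO witness (in the sense of Figure~\ref{Figure:ShapeViolation}) of the fenced program lifts to a TSO witness of the original program with the same attack. I would state this as a short auxiliary lemma --- "if $\afenceset \subseteq \afenceset'$, then every attack feasible in the program fenced by $\afenceset'$ is already feasible in the program fenced by $\afenceset$" --- and invoke it in the two cases above. All other steps are routine once this monotonicity is in place.
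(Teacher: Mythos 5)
Your proposal is correct and follows essentially the same route as the paper's proof: use Theorem~\ref{Theorem:CharacterizationRobustness} to see that $\afenceset$ eliminates every feasible attack, extract an irreducible eliminating subset $\afenceset_\attack\subseteq\afenceset$ for each, and conclude from irreducibility that no label of $\afenceset$ can lie outside the union. You merely make explicit the monotonicity of fence insertion that the paper's terse final sentence leaves implicit, which is a reasonable thing to spell out.
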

\begin{proof}
By Theorem~\ref{Theorem:CharacterizationRobustness}, fence set $\afenceset$ eliminates all feasible attacks.
Therefore, it includes some irreducible eliminating fence set $\afenceset_{\attack}$ for every feasible attack $\attack$.
By irreducibility, $\afenceset$ cannot contain labels outside the union of these $\afenceset_{\attack}$ sets.
\qed
\end{proof}
In compliance with the lemma, in the Dekker's program $\afenceset=\afenceset_{\attack}\cup\afenceset_{\attack'}$.

Lemma~\ref{Lemma:FenceSetAsUnion} is useful for fence computation since optimal fence sets are always irreducible.
All irreducible eliminating fence sets for attacks can be constructed by an exhaustive search through all selections of labels involved in the attack.
For each candidate fence set, to judge whether it eliminates the attack, we check SC reachability in the instrumented program as described in Section~\ref{Section:Reduction}.

Note that this search may raise an exponential number of reachability queries.
In practice this rarely constitutes a problem.
First, attacks seldom have a large number of involved labels, so the number of candidates is small.
Second, the reachability checks can be avoided if a candidate fence set covers all the ways in the control flow graph from $\storeinst$ to $\loadinst$.

\subsection{Computing an Optimal Valid Fence Set}
To choose among the sets $\afenceset_{\attack}$, we set up a $0/1$-integer linear programming (ILP) problem $M_{\aprog}\matmult x_{\aprog}\geq b_{\aprog}$.
The optimal solutions $f(x_{\aprog})\rightarrow \min$
correspond to optimal fence sets.
Here, $0/1$ means the variables are restricted to yield Booleans.

We define inequalities that encode the feasible attacks with their corrections.
Consider attack $\attack$ for which we have determined the irreducible eliminating fence sets $\afenceset_{1},\ldots, \afenceset_n$.
For each set, we introduce a variable $x_{\afenceset_i}$  and set up Inequality~\eqref{Equation:ILP}(left).
It selects a fence set to eliminate the attack.
\begin{alignat}{3}
\sum_{1\leq i\leq n}x_{\afenceset_i}&\geq 1 \hspace{1cm}
\sum_{\alab\in \afenceset_i}x_{\alab}&\geq  \power{\afenceset_i}x_{\afenceset_i}\hspace{1cm}
 f(x_{\aprog})&:=\sum_{\alab\in \labdom}\costfun(\alab)x_\alab.
\label{Equation:ILP}
\end{alignat}
When $\afenceset_i$ has been chosen, we insert a fence at each of its labels $\alab$.
We add further variables $x_\alab$, and encode this insertion by Inequality~\eqref{Equation:ILP}(center).
By definition of the ILP, the variables $x_{\afenceset_i}$ and $x_\alab$
will only take Boolean values $0$ or $1$.
So if $x_{\afenceset_i}$ is set to $1$, the inequality requires that all $x_\alab$ with $\alab\in \afenceset_i$ are set to $1$.

Our goal is to select fences with minimal costs.
We encode this into the objective function \eqref{Equation:ILP}(right).
An optimal solution $x^*$ of the resulting 0/1-ILP denotes the fence set $\afenceset(x^*):=\setcond{\alab\in \labdom}{x^*_\alab=1}$.

\begin{theorem}
$\afenceset(x^*)$ is valid and optimal, and thus solves fence computation.
\end{theorem}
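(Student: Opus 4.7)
The plan is to split the claim into two parts: \emph{validity} (the set $\afenceset(x^*)$ makes the program robust) and \emph{optimality} (no valid fence set has smaller total cost). Both rely on the inequalities in~\eqref{Equation:ILP} together with Theorem~\ref{Theorem:CharacterizationRobustness} and Lemma~\ref{Lemma:FenceSetAsUnion}.

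For validity, I would fix an arbitrary feasible attack $\attack$ and let $\afenceset_1,\ldots,\afenceset_n$ be its enumerated irreducible eliminating fence sets. Since $x^*$ satisfies the inequality $\sum_{1\leq i\leq n} x^*_{\afenceset_i} \geq 1$ and every $x^*_{\afenceset_i}\in\{0,1\}$, at least one variable $x^*_{\afenceset_j}$ equals $1$. The second inequality $\sum_{\alab\in\afenceset_j} x^*_\alab \geq \power{\afenceset_j}\, x^*_{\afenceset_j}$ then forces $x^*_\alab=1$ for every $\alab\in\afenceset_j$, so $\afenceset_j\subseteq\afenceset(x^*)$. Hence $\afenceset(x^*)$ contains an eliminating fence set for $\attack$, and after inserting the fences of $\afenceset(x^*)$ the attack $\attack$ is no longer feasible. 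Since $\attack$ was arbitrary, no attack is feasible in the resulting program, which by Theorem~\ref{Theorem:CharacterizationRobustness} is robust.

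For optimality, I would take any valid fence set $\afenceset'$ and show $\sum_{\alab\in\afenceset'}\costfun(\alab)\geq f(x^*)$. Without loss of generality $\afenceset'$ is irreducible (removing labels only decreases cost, since $\costfun>0$). By Lemma~\ref{Lemma:FenceSetAsUnion}, $\afenceset' = \bigcup_{\attack} \afenceset_\attack'$, where each $\afenceset_\attack'$ is one of the enumerated irreducible eliminating fence sets for feasible attack $\attack$. Define an ILP assignment $x'$ by $x'_{\afenceset_\attack'}:=1$ for each chosen summand, $x'_{\afenceset_i}:=0$ for all other enumerated sets, and $x'_\alab := 1$ iff $\alab\in\afenceset'$. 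A routine verification shows that $x'$ is $0/1$-valued and satisfies both families of inequalities in~\eqref{Equation:ILP}: the first because one term on the left is $1$, the second because if $x'_{\afenceset_i}=1$ then $\afenceset_i=\afenceset_\attack'\subseteq\afenceset'$, so all corresponding $x'_\alab=1$. Then $f(x')=\sum_{\alab\in\afenceset'}\costfun(\alab)$, and optimality of $x^*$ gives $f(x^*)\leq f(x')$, which is the desired inequality.

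The main obstacle I expect is the decomposition step for optimality: one has to argue that every irreducible valid fence set can be faithfully encoded back into the ILP variables, and in particular that the sets $\afenceset_\attack'$ produced by Lemma~\ref{Lemma:FenceSetAsUnion} actually appear among the enumerated $\afenceset_1,\ldots,\afenceset_n$ for their respective attacks. This is true by construction since the ILP uses \emph{all} irreducible eliminating fence sets, but making this correspondence explicit — and treating the degenerate case where the program has no feasible attacks, so that $x^*\equiv 0$ represents $\afenceset(x^*)=\emptyset$ — is where the bookkeeping lies.
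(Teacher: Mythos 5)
Your argument is correct and follows exactly the route the paper intends (the paper states this theorem without an explicit proof, relying on the ILP construction, Lemma~\ref{Lemma:FenceSetAsUnion}, and Theorem~\ref{Theorem:CharacterizationRobustness} in just the way you spell out): the constraints force a full eliminating set per feasible attack into $\afenceset(x^*)$, and every irreducible valid fence set induces a feasible ILP assignment of equal cost. The only step left tacit is that eliminating fence sets are monotone (adding further fences cannot make an eliminated attack feasible again), which is immediate since fences only restrict TSO computations.
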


\section{Experimental Evaluation}\label{Section:Experiments}
We implemented our algorithms
in a prototype
called \trencher{}~\cite{Trencher}.
The tool performs the reduction of robustness to SC reachability given in  Section~\ref{Section:Reduction} and computes a minimal fence set as described in Section~\ref{Section:FenceInsertion}.
\trencher{} executes independent reachability queries in parallel and uses Spin~\cite{Holzmann97} as back-end model checker.
With \trencher{}, we have performed a series of experiments.

\subsection{Examples}
The first class of examples are mutual exclusion protocols that are implemented via shared variables.
These protocols are typically not robust under TSO and require additional fences after stores to synchronization variables.
We studied robust and non-robust instances of Dekker and Peterson for two threads, as well as Lamport's fast mutex \cite{lamport87:fast} for three threads.
Moreover, we checked the CLH and MCS Locks, robust list-based queue locks that use compare-and-set \cite{Herlihy2008}.

As second class of examples, we considered concurrent data structures.
The Lock-Free Stack is a concurrent stack implementation using compare-and-swap~\cite{Herlihy2008}.
Cilk's THE WSQ is a work stealing queue from the implementation of the Cilk-5 programming language~\cite{Frigo1998}.

Finally, we consider miscellaneous concurrent algorithms that are known to be sensitive to program order relaxations.
We analyse several instances of the Non-Blocking Write protocol \cite{KopetzNBW93}.
NBWL is the spinlock + non-blocking write example considered by Owens in Section 8 of \cite{Owens2010}.
Finally, our tool discovers the known bug in Java's Parker implementation that is due to TSO relaxations~\cite{dice09:park}.

The test inputs are available online~\cite{Trencher}.

\begin{table}[thb!]
\centering
\small
\setlength{\tabcolsep}{2.75pt}
\begin{tabular}{|l||r|r|r||r|r|r|r||r||r|r|r|}
\hline
Program&T&L&I&RQ&NR1&NR2&R&F&Spin&Ver&Real\\
\hline
\hline
Peterson (non-robust)&2&14&18&23&2&12&9&2&7.7&0.5&2.9\\
\hline
Peterson (robust)&2&16&20&12&12&0&0&0&0.0&0.0&0.0\\
\hline
Dekker (non-robust)&2&24&30&95&12&28&55&4&31.7&2.1&14.2\\
\hline
Dekker (robust)&2&32&38&30&30&0&0&0&0.0&0.0&0.0\\
\hline
Lamport (non-robust)&3&33&36&36&9&15&12&6&14.4&6.0&5.9\\
\hline
Lamport (robust)&3&39&42&27&27&0&0&0&0.0&0.0&0.0\\
\hline
CLH Lock (robust)&7&62&58&54&48&6&0&0&4.9&0.2&1.6\\
\hline
MCS Lock (robust)&4&52&50&30&26&4&0&0&2.9&0.4&0.9\\
\hline
\hline
Lock-Free Stack (robust)&4&46&50&14&14&0&0&0&0.0&0.0&0.0\\
\hline
Cilk's THE WSQ (non-robust)&5&86&79&152&141&8&3&3&10.0&18.0&7.4\\
\hline
\hline
NBW2 (non-robust)&2&21&19&15&9&5&1&1&2.5&0.2&0.8\\
\hline
NBW3 (robust)&2&22&20&15&15&0&0&0&0.0&0.0&0.0\\
\hline
NBW4 (robust)&3&25&22&9&7&2&0&0&0.7&0.1&0.4\\
\hline
NBWL (robust)&4&45&45&30&26&4&0&0&2.7&0.2&0.7\\
\hline
Parker (non-robust)&2&9&8&2&0&1&1&1&0.5&0.0&0.3\\
\hline
Parker (robust)&2&10&9&2&2&0&0&0&0.0&0.0&0.0\\
\hline
\end{tabular}

\vskip 0.5em
\caption{Benchmarking results.}
\label{Table:Experiments}
\end{table}

\subsection{Benchmarking}

We executed \trencher{} on the examples, using a machine with Intel(R) Core(TM) i5 CPU M 560 @ 2.67GHz (4 cores) running GNU/Linux.
Table~\ref{Table:Experiments} summarizes the results.
The columns T, L, and I give the number of threads, labels, and instructions in the example.
RQ is the number of reachability queries raised by \trencher{}.
Provided the example is robust, this number is equal to the number of attacks $(\attacker, \storeinst, \loadinst)$.
NR1 is the number of verification queries that were answered negatively by \trencher{} itself, without running Spin.
Such queries correspond to attacks where $\storeinst$ cannot be delayed past $\loadinst$ because of memory fences or locked instructions in between.
NR2 and R are the numbers of queries that are answered negatively/positively by the external model checker.
Hence, RQ = NR1 + NR2 + R.
F is the number of fences inserted.

The column Spin gives the total CPU time taken by Spin and Clang, the C compiler, to produce a verifier executable (pan).
The column Ver provides the total CPU time taken by \trencher{} and the external verifier.
Real is the wall-clock time in seconds of processing an example.
All times are given in seconds.

\subsection{Discussion}
The analysis of robust algorithms is particularly fast.
They typically only have a small number of attacks that have to be checked by a model checker.
Robust Dekker and Peterson do not have such attacks at all.
In the CLH and MCS locks, their number is less than 20\%.

In some examples (non-robust Dekker, CLH Lock, NBW2, NBW4), up to 94\% of the CPU time was spent on generating verifiers.
This leaves room for improvement by switching to a model checker without compilation phase.
For some examples (LamNR, CLHLock), the wall-clock time constitutes $1/3$ to $1/4$ of the CPU time (4-cores).
This confirms good parallelizability of the approach.

Remarkably, our trace-based analysis can establish robustness of the NBWL example, as opposed to the earlier analyses via triangular data races which would have to place a fence \cite{Owens2010}.

\acks
The second author has been granted by the Competence Center High Performance Computing and Visualization (CC-HPC) of the Fraunhofer Institute for Industrial Mathematics (ITWM).

\bibliographystyle{plain}
\bibliography{cited}

\clearpage
\appendix
\section{Definition of Traces}
Since the definition of traces in Section~\ref{Section:Robustness} was a bit brief, we recall here the full construction.
Consider an SC or a TSO computation $\alpha\in\sccomputof{\aprog}\cup\tsocomputof{\aprog}$.
Its trace $\traceof{\alpha}$ is a node-labelled graph $(N, \lambda, \progorder, \storeorder, \sourceorder)$ with nodes $N$, labelling $\lambda: N\rightarrow \actdom$, and $\progorder, \storeorder, \sourceorder\ \subseteq N\times N$ the aforementioned relations that define the edges.
The \emph{program order} is a union of $\progorder\ = \cup_{\athread\in\threaddom}\progorderof{\athread}$ of per thread total orders.
The \emph{store ordering} $\storeorder\ = \cup_{\anaddr\in\datadomain}\storeorderof{\anaddr}$ gives a total order for the stores to each address.
We use the syntax $\maxof{\progorderof{\athread}}$ and $\maxof{\storeorderof{\anaddr}}$ to access the maximal elements in these total orders.
Finally, we have a \emph{source relation} $\sourceorder$ from stores to loads.

Traces are defined inductively, starting from the empty trace for the empty word $\varepsilon$.
Assume we already constructed $\traceof{\alpha}=(N, \lambda, \progorder, \storeorder, \sourceorder)$.
In the definition of $\traceof{\alpha\cdot\anact}:=(N\cup \{n\}, \lambda', \progorder', \storeorder', \sourceorder')$, the choice of $n$ depends on the type of $\anact$.
If we have a store, we use the moment the action was issued.
Otherwise, we add a new node:
\begin{description}
\item[$\anact=(\athread,\storeact,\anaddr,\aval)$]
Let $n$ be the minimal node in $\progorderof{\athread}$ labelled by $\lambda(n)=\issueact$. 
We set $\lambda':=\lambda[n:=\anact]$ and $\progorder'\ :=\ \progorder$.
\item[$\anact\neq (\athread,\storeact,\anaddr,\aval)$] We add a fresh node $n\notin N$ to the trace, set
    $\lambda':=\lambda\cup\{(n, \anact)\}$, and $\progorder'\ :=\ \progorder\cup\ \{(\maxof{\progorderof{\athread}}, n)\}$.
\end{description}
The store order is updated only for stores $(\athread,\storeact,\anaddr,\aval)$.
We define $\storeorder'\ :=\ \storeorder \cup\ \{(\maxof{\storeorderof{\anaddr}},
n)\}$.
The relation is not changed otherwise.
The source relation is updated only for loads and stores.
In case of a load $(\athread,\loadact,\anaddr,\aval)$ we set $\sourceorder'\ :=\ \sourceorder \cup\ \{(\maxof{\storeorderof{\anaddr}}, n)\}$.
In case of a store $(\athread,\storeact,\anaddr,\aval)$ we update the source relation for loads that read from the store early: for all nodes $m$ with $n\progorder^+m$ and $\lambda(m) = (\athread,\loadact,\anaddr,\aval)$ we set $\sourceorder'\ :=\ (\sourceorder \setminus\ \set{(*, m)}) \cup\ \set{(n, m)}$.\\[0.2cm]
Consider trace $\traceof{\alpha}$ with $\alpha\in\tsocomputof{\aprog}$.
The \emph{conflict relation} $\conflictorder$ from load to store actions makes cyclic accesses in the trace visible.
We define $\loadact\conflictorder\storeact$ if there is another store action $\storeact'$ in $\traceof{\tau}$ that satisfies $\storeact'\sourceorder \loadact$ and $\storeact'\storeorder \storeact$.
If $\loadact$ reads the initial value of an address and $\storeact$ overwrites it, we also have $\loadact\conflictorder \storeact$.
The \emph{happens-before relation} of a trace is a union of all four relations: $\happensbefore\ :=\ \progorder\cup\storeorder\cup\sourceorder\cup\conflictorder$.

\section{Minimal Violations and Locality}
\label{Appendix:locality}
in our earlier work~\cite{BMM11} we showed that in a minimal violation only one thread reorders its actions.
Since we employ here a more elaborate programming model, this locality result has to be checked again.

Consider a computation $\tau=\alpha\cdot a\cdot\beta\cdot b\cdot\gamma\in\tsocomputof{\aprog}$ with two actions $a$ and $b$ of the same thread $\threadof{a}=\athread=\threadof{b}$.
We define the \emph{distance $\thedistance{\tau}{a}{b}$ between $a$ and $b$ in $\tau$} as the number of actions in $\beta$ that also belong to this thread:
$\thedistance{\tau}{a}{b}:=\length{\projectionOf{\beta}{\athread}}$.
The \emph{number of delays} $\thedelays{\tau}$ in computation $\tau$ is the sum of distances between corresponding issue and store actions:
$$
\thedelays{\tau} := \sum_{\text{corr. }\issueact,\storeact\text{ in }\tau}\thedistance{\tau}{\issueact}{\storeact}.
$$

We call a violating computation $\tau$ a \emph{minimal violation} if it is has a minimal number of delays among all violating computations.
Clearly, a program $\aprog$ has violating computations if and only if it has a minimal violation.

The following lemma says that if a store action has been delayed, then it has been delayed past a load action of the same thread.
Moreover, the load did not read the value of this store action early.
\begin{lemma}
\label{Lemma:WritesDelayPastReads}
Consider a minimal violation $\tau = \alpha\cdot\issueact\cdot\beta\cdot\storeact\cdot\gamma\in\tsocomputof{\aprog}$, where $\issueact$ and $\storeact$ stem from the same instruction instance of thread $\athread$.
Then $\projectionOf{\beta}{\athread}$ is either empty, or $\projectionOf{\beta}{\athread}=\beta'\cdot\loadact\cdot\beta''$ where $\loadact$ is a load action with $\adrof{\loadact}\neq\adrof{\storeact}$ and $\beta''$ contains only store actions.
\end{lemma}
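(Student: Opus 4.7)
The plan is a proof by contradiction. I assume $\tau$ is a minimal violation with $\pi := \projectionOf{\beta}{\athread}$ non-empty yet failing the claimed structural form, and I produce $\tau' \in \tsocomputof{\aprog}$ that is still violating and satisfies $\thedelays{\tau'} < \thedelays{\tau}$, contradicting minimality.

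Let $x$ be the last element of $\pi$ in $\tau$-order. Failure of the structural form means that $x$ is either (i) a store of $\athread$, (ii) an issue $\issueact''$ of $\athread$ (whose corresponding $\storeact''$ must lie in $\gamma$ by FIFO), (iii) a local action of $\athread$ (a memory fence is impossible, since the rule for $\themfence{}$ requires the buffer to be empty and $\storeact$ is still buffered), or (iv) a load of $\athread$ with $\adrof{x}=\adrof{\storeact}$. In each case I build $\tau'$ by a local rearrangement around $\storeact$ that strictly decreases $\thedistance{\tau}{\issueact}{\storeact}$ by one and does not increase any other distance between an corresponding issue and store.

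Cases (ii), (iii), and the easy sub-case of (iv) are handled by a one-step swap of $x$ with $\storeact$, together with the intervening non-$\athread$ actions, which all belong to different threads and can be moved freely. Locals and issues commute with $\storeact$ trivially: a local action only touches $\athread$'s register file, and an issue only enqueues into $\athread$'s buffer while $\storeact$ only dequeues. For (iv), if $x$ reads its value early from $\storeact$ in the buffer, then after the swap $x$ reads the same value from memory since $\storeact$ has just flushed, so the source relation is preserved. If instead $x$ reads from a later-issued buffered store $\storeact'''$ (which by FIFO lies in $\gamma$), I first pull $\storeact'''$ forward to just after $x$ without changing the trace, and then reduce to the previous sub-case.

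The main obstacle is case (i), where FIFO forbids a direct swap of $\storeact''$ with $\storeact$. Here I exploit that $\storeact''$'s issue $\issueact''$ lies in $\alpha$, since $\storeact''$ flushes before $\storeact$ and FIFO requires the issues to appear in the same program order. The remedy is to flush $\storeact''$ eagerly, immediately after $\issueact''$, which removes $\storeact''$ from $\beta$ and strictly decreases both $\thedistance{\tau}{\issueact''}{\storeact''}$ and $\thedistance{\tau}{\issueact}{\storeact}$. TSO-validity is immediate from the FIFO of $\athread$'s buffer. Trace preservation is the technical crux and requires that no load or store of $\adrof{\storeact''}$ in the traversed segment changes source or order; if such an obstruction appears, I iterate the argument with a more interior candidate in $\pi$ (the innermost store of $\athread$ admitting the move), which terminates since $\pi$ is finite.
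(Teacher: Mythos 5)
Your cases (ii)--(iv) essentially reproduce the paper's argument for the last \emph{non-store} action of $\athread$ in $\beta$: the paper likewise moves that action to just after $\storeact$, notes that a fence is excluded because the buffer is non-empty, and checks that the trace and all other delay distances are preserved. One small remark there: in the load sub-case where $x$ reads from a later-issued buffered store $\storeact'''$, no rearrangement of $\storeact'''$ is needed at all --- after the swap $x$ still finds that entry in the buffer and reads the same value --- and your proposed ``pull $\storeact'''$ forward to just after $x$'' is actually FIFO-invalid, since $\storeact'''$ was issued after $\issueact$ and therefore cannot flush before $\storeact$ does.

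The genuine gap is case (i). A trailing store in $\projectionOf{\beta}{\athread}$ is \emph{not} a failure of the claimed form ($\beta''$ is explicitly allowed to consist of stores), so from ``the last element of $\pi$ is a store'' no contradiction with minimality can follow; if your eager-flush argument were sound, it would show that no minimal violation has $\projectionOf{\beta}{\athread}$ ending in a store, which is false. Concretely the argument breaks in two places. First, TSO-validity of flushing $\storeact''$ immediately after $\issueact''$ is not ``immediate from FIFO'': FIFO requires every entry issued before $\issueact''$ to flush first, and those flushes may occur after $\issueact''$ in $\tau$. Second, and fatally, $\storeact''$ is itself a delayed store of the attacker; un-delaying it moves it past other threads' accesses to $\adrof{\storeact''}$ and in general changes the store order and source relation, hence the trace --- this is precisely the mechanism that produces violations in the first place --- and ``iterating with a more interior candidate'' does not remove the obstruction, since the other stores in the trailing block face it as well. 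The paper avoids all of this by never touching the trailing stores: it splits on whether $\projectionOf{\beta}{\athread}$ consists \emph{only} of stores, in which case it moves $\issueact$ and $\storeact$ (not the other stores) to obtain $\alpha\cdot\beta\cdot\issueact\cdot\storeact\cdot\gamma$, a trace-preserving rearrangement with strictly fewer delays; otherwise it applies your cases (ii)--(iv) to the last \emph{non-store} action of $\athread$ in $\beta$, so that the only surviving possibility is a load from a different address followed by stores, which is exactly the claimed form.
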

\begin{proof}
Suppose $\beta$ contains one or more actions of thread $\athread$.
If all actions of thread $\athread$ in $\beta$ are stores, then also
$\tau'=\alpha\cdot\beta\cdot\issueact\cdot\storeact\cdot\gamma$
is a TSO computation of $\aprog$. It has the same trace as $\tau$ but $\thedelays{\tau'} < \thedelays{\tau}$, which contradicts minimality of $\tau$.

Otherwise let $a$ be the last non-store action in $\projectionOf{\beta}{\athread}$, i.e., $\beta=\beta_1\cdot a\cdot\beta_2$ and all actions in $\beta_2$ are stores or belong to threads different from $\athread$.
Since store actions cannot be delayed past a memory fence of the same thread, $a$ is an issue action, a local action, or a load.
In the former two cases, as well as if $a$ is a load from $\adrof{\loadact}=\adrof{\storeact}$, delaying $\storeact$ past $a$ can be avoided in the computation $\tau' = \alpha\cdot\issueact\cdot\beta_1\cdot\beta_2\cdot\storeact\cdot a\cdot\gamma$ of $\aprog$. It has the same trace as $\tau$ and $\thedelays{\tau'} < \thedelays{\tau}$, which contradicts minimality of $\tau$. \qed
\end{proof}
In the remainder of the section, we develop a method to detect happens-before relations in a trace with the help of embedded computations.
We relate two actions in a computation iff the corresponding nodes in the trace are related.
To avoid case distinctions for issue and store actions that yield the same node in the trace, we introduce the \emph{issue relation} $\issueorder$ that links them: $\issueact\issueorder\storeact$.
We include $\issueorder$ into $\happensbefore$.

\begin{definition}[\cite{BMM11}]\label{Definition:HBThrough}
Let $\tau = \alpha\cdot a\cdot \beta\cdot b\cdot \gamma\in\tsocomputof{\aprog}$.
We say \emph{$a$ happens-before $b$ through $\beta$} if there is a (potentially empty) subsequence $c_1\ldots c_n$ of $\beta$ that satisfies (assuming $c_0:= a$ and $c_{n+1}:= b$):
$$a_i\rightarrow_{\hb} a_{i+1}\quad\text{ or }\quad{}a_i\progorder^+a_{i+1}\quad\text{ for all }i\in[0, n].$$
\end{definition}
The next lemma states that the just defined relation is stable under insertion.
\begin{lemma}[\cite{BMM11}]
\label{Lemma:StabilityUnderInsertion}
Consider computations $\tau=\alpha\cdot a\cdot\beta\cdot b\cdot\gamma$ and $\tau'=\alpha'\cdot a\cdot\beta'\cdot b\cdot\gamma'$ in $\tsocomputof{\aprog}$ so that $\projectionOf{\tau}{\athread}=\projectionOf{\tau'}{\athread}$ for every thread $\athread$.
Let $\beta$ be a subsequence of $\beta'$.
Then if $a\happensbefore^+b$ through $\beta$ then $a\happensbefore^+b$ through $\beta'$.
\end{lemma}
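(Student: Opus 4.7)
The plan is to take the witnessing chain $c_0 = a, c_1, \ldots, c_n, c_{n+1} = b$ in $\beta$ provided by Definition~\ref{Definition:HBThrough} for $\tau$, and inflate it, if necessary, into a chain in $\beta'$ that witnesses the same happens-before in $\tau'$. Because $\beta$ is a subsequence of $\beta'$ and the per-thread projections of $\tau$ and $\tau'$ coincide, the actions $a, c_1, \ldots, c_n, b$ appear as a subsequence of $a \cdot \beta' \cdot b$ in the same relative order, so what remains is to verify that each single consecutive edge along this sequence that holds in the trace of $\tau$ either still holds in the trace of $\tau'$ or can be repaired by a short chain routed through the extra actions of $\beta'\setminus\beta$.

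I would dispatch this edge-preservation claim by a case analysis on the type of the edge. The $\progorder^+$ and $\issueorder$ edges lift to $\tau'$ immediately because both relations are determined by per-thread data alone, and the projections coincide by hypothesis. A $\storeorder$ edge between two same-address stores is preserved because both endpoints lie in $a \cdot \beta \cdot b$, hence also in $a \cdot \beta' \cdot b$, and their relative order is forced by the subsequence relation. The $\conflictorder$ case unfolds by its definition into a combination of $\sourceorder$ and $\storeorder$, reducing it to the remaining case.

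The main obstacle is therefore the $\sourceorder$ edge, since the provider of a given load may differ between $\tau$ and $\tau'$. I would handle this using the equality of per-thread projections: the load $c_{i+1}$ reads the same value in both computations, so if its provider in $\tau'$ is some $c'$ different from $c_i$, then $c'$ must be another store to the same address with the same value sitting strictly between $c_i$ and $c_{i+1}$ in $\tau'$. Such a $c'$ necessarily lies in $\beta'$ (it cannot be in $\alpha'$ nor in $\gamma'$, as $c_i$ and $c_{i+1}$ bracket it) and yields a replacement sub-chain $c_i \storeorder c' \sourceorder c_{i+1}$ entirely within $\beta'$. A mild subtlety is the TSO split between buffer and memory sourcing; I would treat the two cases separately, using the equality of projections of $\threadof{c_{i+1}}$ to pin down the state of its own buffer and the subsequence relation $\beta \subseteq \beta'$ to control the committed order of memory writes to this address.

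Splicing these local repairs into the original sequence produces a chain in $\beta'$ whose consecutive edges are all $\happensbefore$- or $\progorder^+$-edges of $\tau'$, and therefore witnesses $a \happensbefore^+ b$ through $\beta'$ in $\tau'$. I do not expect to need Lemma~\ref{Lemma:WritesDelayPastReads} here, as the statement is purely about happens-before preservation and not about minimality of delays.
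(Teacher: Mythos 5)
The paper itself does not prove Lemma~\ref{Lemma:StabilityUnderInsertion}: it is imported from \cite{BMM11} without argument, so there is no in-paper proof to compare against. Two remarks on your attempt. First, in every place the lemma is actually invoked (the proof of Lemma~\ref{Lemma:Duality}), $\tau'$ is obtained from $\tau$ by trace-preserving permutations, so $\traceof{\tau}=\traceof{\tau'}$; under that extra hypothesis the lemma is essentially immediate --- the happens-before relation is a property of the (shared) trace, and the witnessing subsequence $c_1\ldots c_n$ of $\beta$ is automatically a subsequence of $\beta'$. Your proof aims at the literal, stronger statement in which only per-thread projections are assumed equal, which is legitimate but forces you to re-establish each edge in $\traceof{\tau'}$; that is where the work, and the risk, lies.

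Second, within that stronger reading there is a concrete gap in your case analysis: the conflict edge. You claim $\conflictorder$ ``unfolds into a combination of $\sourceorder$ and $\storeorder$, reducing it to the remaining case,'' but the edge $\loadact\conflictorder\storeact$ is directed from the load to the store, whereas the defining pair runs $\storeact'\sourceorder\loadact$ and $\storeact'\storeorder\storeact$ --- both edges point \emph{away} from the provider $\storeact'$, so there is no forward path from $\loadact$ to $\storeact$ through $\storeact'$ that your splicing technique could substitute. Preserving a $\conflictorder$ edge therefore needs its own argument: you must show that in $\tau'$ the (possibly different) provider of $\loadact$ is still $\storeorder$-before $\storeact$, or that $\loadact$ still reads the initial value that $\storeact$ overwrites. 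This does follow from the ingredients you already have --- identical per-thread projections pin down whether the load is served from its own buffer and by which pending store, and the subsequence relation pins down the relative commit order of the relevant stores --- but it is the genuinely delicate case and the one your proposal skips; the $\sourceorder$ repair you do spell out is correct but is not the hard part.
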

The following lemma says that if two actions in a minimal violation are not related via $\happensbefore^+$, they can be reordered without changing the trace and the order of actions within each thread.
\begin{lemma}[\cite{BMM11}]
\label{Lemma:Duality}
Consider a minimal violation $\tau=\alpha\cdot a\cdot\beta \cdot b\cdot \gamma\in\tsocomputof{\aprog}$.
Then (1) $a\happensbefore^+b$ through $\beta$ or (2) there is $\tau'=\alpha\cdot\beta_1\cdot b\cdot a\cdot\beta_2\cdot\gamma\in\tsocomputof{\aprog}$ so that $\traceof{\tau}=\traceof{\tau'}$ and $\projectionOf{\tau}{\athread}=\projectionOf{\tau'}{\athread}$ for every thread $\athread$.
\end{lemma}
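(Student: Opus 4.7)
The plan is to prove the Duality Lemma by assuming case (1) fails and constructing the witness computation demanded by case (2). Suppose $a \not\happensbefore^+ b$ through $\beta$. A first observation is that $\threadof{a} \neq \threadof{b}$, since otherwise Definition~\ref{Definition:HBThrough} (taking the empty intermediate subsequence) provides $a \progorder^+ b$ as a happens-before path through $\beta$, contradicting the assumption.

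The construction of $\tau'$ would proceed by a sequence of adjacent swaps of actions from distinct threads. Concretely, I would partition $\beta$ into two interleaved subsequences $\beta_1$ and $\beta_2$, where $\beta_2$ contains exactly those $c \in \beta$ with $a \happensbefore^+ c$ through the prefix of $\beta$ ending just before $c$, closed under later actions of the same thread (so each $\projectionOf{\beta_2}{\athread}$ is a suffix of $\projectionOf{\beta}{\athread}$). The complementary subsequence $\beta_1$ then consists of actions that are happens-before-independent of $a$ in context, and can be pulled leftward past $a$. A routine induction on the number of swaps transforms $\tau$ into $\alpha \cdot \beta_1 \cdot a \cdot \beta_2 \cdot b \cdot \gamma$. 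Since $a \not\happensbefore^+ b$ through $\beta$, and $\beta_2$ contributes no new path from $a$ to $b$ (by construction, $b \notin \beta_2$), a final swap of $a$ and $b$, legal because $\threadof{a} \neq \threadof{b}$ and the two actions are happens-before-independent in the new context, yields $\tau' = \alpha \cdot \beta_1 \cdot b \cdot a \cdot \beta_2 \cdot \gamma$ as required.

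The main obstacle is verifying that each elementary swap preserves (i) TSO validity, (ii) the trace, and (iii) all per-thread projections. Clause (iii) is immediate since every swap is between actions of different threads, leaving each $\projectionOf{\cdot}{\athread}$ untouched. For (i) and (ii), the key fact to establish is that when adjacent actions $x \cdot y$ of different threads are not connected by any of $\progorder, \storeorder, \sourceorder, \conflictorder, \issueorder$ in the current context, their effects commute: store buffers are thread-local so no $\issueact$/$\storeact$ pair is broken by inserting a foreign action between them; the source of any load is unchanged because otherwise $\sourceorder$ or $\storeorder$ between $x$ and $y$ would contradict their happens-before-independence; and $\storeorder$ between them would itself have been a happens-before arrow. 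Lemma~\ref{Lemma:StabilityUnderInsertion} then propagates the invariant $a \not\happensbefore^+ b$ across the sequence of swaps, and minimality of $\tau$ — together with the fact that per-thread projections are preserved — ensures the delay count never increases, so we stay inside $\tsocomputof{\aprog}$ throughout the rearrangement.
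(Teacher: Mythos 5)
There is a genuine gap, and it sits exactly at the phenomenon this lemma exists to handle: delayed stores. Your ``first observation'' that $\threadof{a}=\threadof{b}$ would force $a\progorder^+ b$ is false. Program order is determined by the order in which actions are \emph{issued}, not by their positions in the computation; in a TSO computation an action $b$ occurring after $a$ may well satisfy $b\progorder^+ a$, namely when $b$ is a store of the same thread that was issued before $a$ and delayed past it (this is precisely the store--load reordering of Figure~\ref{Figure:DekkerTsoComputation}). In that situation $a\not\happensbefore^+ b$ through $\beta$ can hold with $\threadof{a}=\threadof{b}$, and no swap preserving per-thread projections is available. The paper's proof discharges this case differently: in the base case of its induction on $\length{\beta}$ it observes that such a $b$ is a delayed store, so swapping $a$ and $b$ preserves the trace while strictly decreasing $\thedelays{\tau}$, contradicting minimality. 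Minimality is thus \emph{essential} already for your opening claim, whereas you invoke it only at the end and only for the (automatic) preservation of the delay count under cross-thread swaps.

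The same omission resurfaces inside your rearrangement. Your ``final swap of $a$ and $b$'' is not an adjacent transposition: it moves $b$ leftward past all of $\beta_2$, and you must rule out that $\beta_2$ contains an action $c$ with $\threadof{c}=\threadof{b}$ and $b\progorder^+ c$ (again, $b$ a delayed store of that thread); moving $b$ past such a $c$ would change $\projectionOf{\tau}{\threadof{b}}$, violating conclusion~(2). Similarly, $\beta_1$ may contain a delayed store $c$ of $\threadof{a}$ with $c\progorder^+ a$, so pulling $\beta_1$ left past $a$ is not always a cross-thread commutation. None of these cases is excluded by happens-before independence alone; each must be killed by a minimality argument. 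The paper's induction on $\length{\beta}$ is organized so that every elementary reordering reduces to the two-action base case, where the same-thread subcase is uniformly eliminated by minimality and the cross-thread subcase is the commutation you describe. Your partition-and-swap plan is salvageable, but only after adding these minimality arguments at every point where a same-thread inversion (issue before, store after) can occur.
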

\begin{proof}
We establish $\neg(1)\Rightarrow(2)$. Note that this proves the disjunction since $\neg (2)\Rightarrow (1)$ is the contrapositive.
We proceed by induction on $\length{\beta}$ and slightly strengthen the hypothesis: we also show that $\beta_2$ is a subsequence of $\beta$.\\[0.2cm]
\textbf{Base case:} $\length{\beta} = 0$.
Then $\tau = \alpha\cdot a\cdot b\cdot \gamma$ and $a\not\happensbefore b$.
If $\threadof{a}=\threadof{b}$, then $b\progorder^+ a$.
Therefore, $b$ is a store action which has been delayed past $a$.
Swapping $a$ and $b$ will save the delay without changing the trace, in contradiction to the minimality of $\tau$.

If $\threadof{a}\neq\threadof{b}$, then either at least one of the two actions is local, the actions access different addresses, or both are loads.
In all cases swapping them produces $\tau'$ as required in the statement of the lemma.\\[0.2cm]
\textbf{Step case:} Assume the statement holds for $\length{\beta} \leq n$.
Consider $\tau'=\alpha\cdot a\cdot \beta \cdot b\cdot \gamma$ with $\length{\beta} = n + 1$.
Let $c$ be the last action in $\beta = \beta'\cdot c$.
Since $a\not\happensbefore^+ b$ through $\beta$, then $a\not\happensbefore^+ c$ through $\beta'$ or $c\not\happensbefore b$.

Let $a\not\happensbefore^+ c$.
We apply the induction hypothesis to $\tau$ with respect to $a$ and $c$.  This gives $\tau' = \alpha\cdot\beta_1'\cdot c\cdot a\cdot\beta_2'\cdot b\cdot\gamma$ with the same trace and thread computations as $\tau$.
Then, taking into account Lemma~\ref{Lemma:StabilityUnderInsertion}, we apply the hypothesis to $\tau'$ with respect to $a$ and $b$. This yields $\tau'' = \alpha\cdot\beta_1'\cdot c\cdot\beta_{21}'\cdot b\cdot a\cdot\beta_{22}'\cdot\gamma$ having the same trace and thread computations as $\tau'$ and $\tau$.
Note that $\beta_{22}'$ is a subsequence of $\beta_2'$, which in turn is a subsequence of $\beta'$ and hence of $\beta$.

Let $c\not\happensbefore b$. We apply the induction hypothesis to $\tau$ with respect to $b$ and $c$, getting $\tau' = \alpha\cdot a\cdot\beta'\cdot b\cdot c\cdot\gamma$ with the same trace and thread computations as $\tau$.
Applying it again to $\tau'$ with respect to $a$ and $b$ gives $\tau''=\alpha\cdot\beta_1'\cdot b\cdot a\cdot\beta_2'\cdot c\cdot\gamma$.  The computation has the same trace and thread computations as $\tau'$ and $\tau$.
Since $\beta_2'$ is a subsequence of $\beta'$, $\beta_2'\cdot c$ is a subsequence of $\beta$. \qed
\end{proof}

\begin{lemma}[Locality \cite{BMM11}]
In a minimal violation, only a single thread delays stores.
\end{lemma}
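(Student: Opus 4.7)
The plan is to prove locality by contradiction: assume a minimal violation $\tau$ in which two distinct threads both delay stores, and derive from it a violating computation with strictly fewer delays, contradicting minimality of $\tau$. The two main tools I will lean on are Lemma~\ref{Lemma:WritesDelayPastReads}, which forces every delay in a minimal violation to be past a load of the same thread at a different address, and Lemma~\ref{Lemma:Duality}, which lets me swap two actions in a minimal violation that are not connected by $\happensbefore^+$ through the intermediate sequence, without changing the trace or any per-thread projection.

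Concretely, I would first isolate a convenient delayed pair of one thread, say thread $\athread$. Pick a delayed store $\storeact$ of $\athread$ whose distance to its issue is minimal among all delayed stores of $\athread$, so that $\tau = \alpha\cdot\issueact\cdot\beta\cdot\storeact\cdot\gamma$. By Lemma~\ref{Lemma:WritesDelayPastReads}, $\projectionOf{\beta}{\athread}$ has the form $\beta_1'\cdot\loadact\cdot\beta_2'$ with $\loadact$ a load of $\athread$ at an address different from $\adrof{\storeact}$ and $\beta_2'$ consisting only of stores of $\athread$. Now apply Lemma~\ref{Lemma:Duality} to the pair $(\loadact,\storeact)$. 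Either $\loadact\happensbefore^+\storeact$ through what lies in between, or the pair can be swapped, yielding a computation with the same trace but $\thedelays{\cdot}$ strictly smaller, which already contradicts minimality. So we may assume the happens-before chain from $\loadact$ to $\storeact$ holds; because $\storeact$ and $\issueact$ share a trace node, program order of $\athread$ actually puts $\storeact$ before $\loadact$, so this chain cannot be closed inside $\athread$ and must route through actions of other threads.

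I would then bring in the second delaying thread $\athread'$. Picking a delayed pair $(\issueact',\storeact')$ of $\athread'$, Lemma~\ref{Lemma:WritesDelayPastReads} gives a load $\loadact'$ of $\athread'$ between them, and Lemma~\ref{Lemma:Duality} applied to $(\loadact',\storeact')$ yields the same dichotomy: either a forced happens-before chain or a trace-preserving swap that saves a delay. Iterating this argument, or equivalently inducting first on the number of distinct delaying threads and then on $\thedelays{\tau}$, I would progressively eliminate the delays of all threads other than a single chosen one, producing a computation with the same trace as $\tau$ but strictly fewer delays, giving the required contradiction. The main obstacle I expect is making the iteration terminate with a genuine reduction rather than merely shuffling delays between threads; the cleanest way around this is to verify, in the swap case of Lemma~\ref{Lemma:Duality}, that one delay of the currently considered thread is eliminated and no new delay is introduced, so that $\thedelays{\cdot}$ is monotone along the induction. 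With this bookkeeping in place, the contradiction is immediate and the locality conclusion follows.
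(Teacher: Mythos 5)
Your opening moves match the paper's: by Lemma~\ref{Lemma:WritesDelayPastReads} every delayed store oversteps a load of its own thread, and applying Lemma~\ref{Lemma:Duality} to each pair $(\loadact_i,\storeact_i)$ together with minimality forces $\loadact_i\happensbefore^+\storeact_i$, so each delaying thread owns a happens-before cycle. The gap is in what comes next. Your only mechanism for actually reducing the delay count is ``the swap case of Lemma~\ref{Lemma:Duality}'', but by the time you reach the situation that needs resolving --- two threads, each with a forced happens-before chain from its overstepped load to its delayed store --- that swap case has been ruled out for every relevant pair: minimality has already pushed each pair into alternative~(1). So the proposed induction ``on the number of delaying threads and then on $\thedelays{\tau}$'' has no inductive step; you are left with two genuine cycles and no move to make. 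The paper resolves this not by swapping but by \emph{deleting} actions: since a single cycle suffices to witness a violation, it sacrifices one thread's cycle by truncating the computation and erasing that thread's trailing actions (in particular the overstepped load $\loadact_1$, which is what causes that thread's delays), thereby strictly decreasing $\thedelays{\tau}$ while keeping the other thread's cycle intact.

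Making such deletions legal is exactly where the real work sits, and it is why the paper picks the \emph{overall last} delayed store in each of two threads and then splits into three cases according to how the two delay episodes interleave (disjoint, nested, crossing). Each disposition gets its own surgery: cut the suffix after $\storeact_1$; remove $\loadact_1$ and everything program-order after it; or first project the tail onto $\athread_2$ (safe because after $\loadact_2$ that thread performs only stores, which read nothing) and then remove $\loadact_1$, which is by then the program-order last action of $\athread_1$. None of this appears in your sketch, and your choice of a store with minimal issue-to-store distance does not set up any of these surgeries. As written, the proposal establishes only the preliminary fact that every delaying thread carries a cycle and then asserts the reduction; the argument that actually produces a violating computation with strictly fewer delays is missing.
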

\begin{proof}
Consider a minimal violation $\tau\in\tsocomputof{\aprog}$ and suppose at least two threads delayed stores.
By Lemma~\ref{Lemma:WritesDelayPastReads}, each store was delayed past a load of the same thread.
Let $\storeact_2$ of thread $\athread_2$ be the overall last delayed store in $\tau$, and let $\loadact_2$ be the last load of $\athread_2$ overstepped by $\storeact_2$.
Similarly, let $\storeact_1$ be the overall last delayed store in a thread $\athread_1\neq \athread_2$. 
Let $\loadact_1$ be the last load overstepped by $\storeact_1$.

The following fundamental mutual dispositions of reorderings are possible:
\begin{enumerate}
\item $\tau=\gamma_1\cdot\issueact_1\cdot\gamma_2\cdot\loadact_1\cdot\gamma_3\cdot\storeact_1\cdot\gamma_4\cdot\issueact_2\cdot\gamma_5\cdot\loadact_2\cdot\gamma_6\cdot\storeact_2\cdot\gamma_7$
\item $\tau=\gamma_1\cdot\issueact_1\cdot\gamma_2\cdot\loadact_1\cdot\gamma_3\cdot\issueact_2\cdot\gamma_4\cdot\loadact_2\cdot\gamma_5\cdot\storeact_2\cdot\gamma_6\cdot\storeact_1\cdot\gamma_7$
\item $\tau=\gamma_1\cdot\issueact_1\cdot\gamma_2\cdot\loadact_1\cdot\gamma_3\cdot\issueact_2\cdot\gamma_4\cdot\loadact_2\cdot\gamma_5\cdot\storeact_1\cdot\gamma_6\cdot\storeact_2\cdot\gamma_7$
\end{enumerate}
In these three computations every pair $(\loadact_i,\storeact_i)$ provides a happens-before cycle: $\storeact_i\progorder^+\loadact_i$ and, by Lemma~\ref{Lemma:Duality} and minimality, $\loadact_i\happensbefore^+\storeact_i$ through the appropriate subrange of $\tau$.

In the first disposition $\tau$ is not minimal, since it can be shortened to the violating computation $\tau' = \gamma_1\cdot\issueact_1\cdot\gamma_2\cdot\loadact_1\cdot\gamma_3\cdot\storeact_1\cdot\beta$ with $\thedelays{\tau'} < \thedelays{\tau}$. 
Here, $\beta$ contains only store actions of $\athread_2$ that complete earlier issue actions.

In the second disposition $\tau$ is not minimal either.
Starting from $\loadact_1$, thread $\athread_1$ does not perform any actions, except delayed stores, until $\storeact_1$ (Lemma~\ref{Lemma:WritesDelayPastReads}).
Therefore, $\loadact_1$ and all program order later actions of $\athread_1$ can be safely removed from $\tau$ without affecting the happens-before cycle produced by $\athread_2$.
The resulting computation has a smaller number of delays (due to the removed $\loadact_1$), but its trace still includes the cycle by $\athread_2$.
A contradiction to minimality of $\tau$.

Lastly, in the third case $\tau$ is also not minimal.
First we delete $\gamma_7$. Then we erase all actions from $\gamma_6$ that do not belong to $\athread_2$: $\gamma_6'=\projectionOf{\gamma_6}{\athread_2}$. 
By construction, the resulting computation $\tau'$ is a feasible TSO computation:
$$\tau'=\gamma_1\cdot\issueact_1\cdot\gamma_2\cdot\loadact_1\cdot\gamma_3\cdot\issueact_2\cdot\gamma_4\cdot\loadact_2\cdot\gamma_5\cdot\storeact_1\cdot\gamma_6'\cdot\storeact_2.$$
Computation $\tau'$ still contains the happens-before cycle $\storeact_1\progorder^+\loadact_1\happensbefore^+\storeact_1$ inherited from $\tau$.
Since deleting actions cannot increase the number of delays, $\thedelays{\tau'}=\thedelays{\tau}$.
Moreover, since $\tau$ is a minimal violation, so is $\tau'$.

By Lemma~\ref{Lemma:Duality}, $\loadact_2\happensbefore^+\storeact_2$ through $\gamma_5\cdot\storeact_1\cdot\gamma_6'$.
By the choice of $\loadact_1$ and $\storeact_1$ and in accordance with Lemma~\ref{Lemma:WritesDelayPastReads}, $\projectionOf{(\gamma_3\cdot\issueact_2\cdot\gamma_4\cdot\loadact_2\cdot\gamma_5)}{\athread_1}$ only contains delayed stores that were issued before $\loadact_1$.
By definition, $\gamma_6'$ does not contain actions of $\athread_1$ at all.
Therefore, $\loadact_1$ is the program order last action of $\athread_1$. It can be safely removed from $\tau'$ without affecting the cycle of $\athread_2$.
The resulting computation is
$$\tau''=\gamma_1\cdot\issueact_1\cdot\gamma_2\cdot\gamma_3\cdot\issueact_2\cdot\gamma_4\cdot\loadact_2\cdot\gamma_5\cdot\storeact_1\cdot\gamma_6'\cdot\storeact_2.$$
Note that $\thedelays{\tau''} < \thedelays{\tau'}=\thedelays{\tau}$, but computation $\tau''$ still contains the cycle $\storeact_2\progorder^+\loadact_2\happensbefore^+\storeact_2$.
A contradiction to minimality of $\tau$. \qed
\end{proof}

\section{Soundness and Completeness of the Instrumentation}
\begin{theorem}[Soundness and Completeness]
Attack $\attack=(\attacker,\storeinst,\loadinst)$ is feasible in program $\aprog$ iff program $\aprog_{\attack}$ reaches a goal configuration under SC.
\end{theorem}
\begin{proof}
\textbf{Soundness.}
Suppose the instrumented program reaches a goal configuration.
For simplicity, assume that it immediately stops after this.
Then the computation of the instrumented program looks like this:
$$\tau_\attack = \tau_1\cdot\issueact_{\attackstore}\cdot\attacklocalstore\cdot\tau_2\cdot\attackload\cdot\tau_3\cdot\issueact_{\successvar}\cdot\storeact_{\successvar}.$$
The last action, $\storeact_{\successvar}$, is performed by a helper and sets variable $\successvar$ to $\mytrue$, as required by the definition of goal configurations.
This action originates from an instruction generated in accordance with \eqref{Equation:HelperSuccessCheck}.
To reach this instruction, the helper has to enter its code copy.

As required by \eqref{Equation:HelperLoadMove} and \eqref{Equation:HelperStoreMove}, for the first helper to enter its code copy, the attacker must set a $\hb$-variable to a non-zero value by executing $\loadinst$ (action $\attackload$) instrumented in accordance with \eqref{Equation:LoadInst}.
For this, the attacker must enter its code copy and start performing stores to auxiliary addresses.
Accordingly, the first attacker's store to an auxiliary address is denoted by $\attacklocalstore$ in $\tau$. 
It stems from the instrumented $\storeinst$ \eqref{Equation:StoreInst} and is located before $\attackload$.

We elaborate on the contents of $\tau_1$, $\tau_2$, and $\tau_3$.
First, the attacker and helpers execute the code of the original program (helpers — with an additional check at every instruction, \eqref{Equation:HelperOriginal}).
In $\tau_2$ the helpers continue to execute the code of the original program.
Shortly before performing $\attackload$ and stopping, the attacker sets variable $\hb$ to $\mytrue$ thus forcing the helpers to enter their code copies.
Therefore all actions in $\tau_3$ belong to helpers that have entered their code copies.
Also, $\tau_2$ only contains stores of the attacker to auxiliary addresses, and $\tau_3$ does not contain attacker action at all, as follows from \eqref{Equation:LoadInst}.\\[0.2cm]
We now turn $\tau_{\attack}$ into the following TSO witness computation:
$$\tau = \tau_1'\cdot\issueact_{\attackstore}\cdot\tau_2'\cdot\attackload\cdot\tau_3'\cdot\attackstore\cdot\tau_4'.$$
Here, $\tau_1'$ is the subsequence of all $\tau_1$ actions that are produced by instructions from $\aprog$ (this is $\tau_1$ without the conditionals introduced in \eqref{Equation:HelperOriginal}).
Computation $\tau_2'$ is the subsequence of all actions of $\tau_2$ produced by instructions from $\aprog$ and by their clones in the code copy of the attacker, except the store actions to auxiliary address.
These store actions constitute $\tau_4'$.
Finally, $\tau_3'$ is the subsequence of all helper actions of $\tau_3$ produced by clones of instructions from $\aprog$.
We also strip the suffix $\delay$ from the addresses of load and store actions in $\tau_2'$ and $\tau_4'$.

That $\tau$ is a computation of program $\aprog$ follows from the fact that $\tau_\attack$ is executable. 
We just removed actions produced by the instrumentation and replaced buffering by delaying of store actions; we did not change any data dependencies.
The delaying of $\attackstore\cdot\tau_4'$ past $\attackload$ is possible because the attacker did not execute memory fences between $\attacklocalstore$ and $\attackload$, as guaranteed by \eqref{Equation:Fence}.

Let us check that $\tau$ is a TSO witness (Figure~\ref{Figure:ShapeViolation}).
\wita{} holds as in $\tau$ indeed only the attacker delays stores.
The first delayed store $\attackstore$ is an instance of $\storeinst$, load $\attackload$ is an instance of $\loadinst$ and is the last action of the attacker that is overstepped by delayed stores, \witb{} holds.
For each $\anact$ in $\attackload\cdot\tau_3\cdot\attackstore$ it holds $\attackload\happensbefore^*\anact$, \witc{}.
This is by construction of helpers in accordance with Lemma~\ref{Lemma:Access}.
Computation $\tau_4'$ consists only of the stores delayed by the attacker, \witd{}.
\wite{} holds due to the check in \eqref{Equation:LoadInst}.
So, $\tau$ is a TSO witness for attack $\attack$.\\[0.2cm]
\textbf{Completeness.}
Suppose there is a TSO witness $\tau$ for attack $\attack$ as in Figure~\ref{Figure:ShapeViolation}:
$$\tau = \tau_1\cdot\issueact_{\attackstore}\cdot\tau_2\cdot\attackload\cdot\tau_3\cdot\attackstore\cdot\tau_4.$$
We show that the instrumented program has an execution that leads to a goal state.
In the beginning, the instrumented attacker and helper threads execute instructions of the original program, namely those in $\tau_1$.
The helpers actually execute these actions instrumented by \eqref{Equation:HelperOriginal}, i.e., with an additional assert.
These conditionals are executable because the attacker did not yet set variable $\hb$.

Then the attacker executes $\semattackermove{\storeinst}$ ($\storeinst$ is the instruction that produced $\issueact_{\attackstore}$ in $\tau$) and enters the code copy.
Now all its stores will be executed on auxiliary addresses, as defined in \eqref{Equation:StoreInst} and \eqref{Equation:Store}.
This means, they stay invisible to the helper threads as they were in the computation of the original program.
Also, the instrumentation of loads \eqref{Equation:Load} makes sure that they read buffered values, if they exist.
Altogether this preserves the data dependencies from the original computation.

So the attacker executes the instructions that lie in $\tau_2$, instrumented by $\semattacker{-}$. 
Note that $\tau_2$ does not contain memory fences, otherwise $\attackstore$ could not have been delayed past $\attackload$ in $\tau$.
Therefore, \eqref{Equation:Fence} cannot provoke a block of the attacker.
The helpers still execute the actions of the original program, instrumented by \eqref{Equation:HelperOriginal}.
Finally, the attacker executes $\loadinst$ which produced $\attackload$ in $\tau$, Equation~\eqref{Equation:LoadInst}.
This is possible due to \wite{}.

All actions in $\tau_3$ belong to helpers.
By \witc{}, they are in happens-before relation with $\attackload$.
Therefore, due to the instrumentation based on Lemma~\ref{Lemma:Access}, the helpers are able to enter their code copies, \eqref{Equation:HelperLoadMove} and \eqref{Equation:HelperStoreMove}, and execute the instructions that produced $\tau_3$.
Note that the instrumentation of the code copy for helpers does not introduce any conditionals that could block the execution.

At least one of the helper's actions in $\tau_3$ performs a load or a store to the address used in $\attackstore$. 
Otherwise, \witc{} would not hold ($\attackload$ and the delayed write of the attacker use different addresses by \wite{}). 
When performing the action in the instrumented program, the helper will set the $\hb$-variable for the address used in $\attackstore$ to a non-zero value, Equations~\eqref{Equation:HelperStore} and \eqref{Equation:HelperLoad}.
Therefore, at the next step the helper will be able to set $\successvar$ to $\mytrue$ in accordance with \eqref{Equation:HelperSuccessCheck} and make the instrumented program reach a goal state.\qed
\end{proof}

\section{Decidability and Complexity}
The reductions of robustness to reachability and parameterized reachability are independent of the number of addresses and the structure of the data domain.
Hence, without further assumptions the resulting reachability queries cannot be guaranteed to be decidable.
We now discuss conditions on address space and data domain that render
robustness decidable.
Note that we only have to restrict these two dimensions.
The instrumentation copes with the unbounded size store buffers.
Moreover, we choose the verification technology so that it handles the unbounded number of threads required in parameterized reachability.
\paragraph{Parallel Programs with Finite Domains}
Consider a parallel program over a finite data domain, and hence finite address space.
In this setting robustness is \pspace-complete \cite{BMM11}.
Our earlier proof is of complexity-theoretic nature: based on enumeration and not meant to be implemented.
The instrumentation in this paper yields an alternative proof of membership in \pspace\ that is conceptually simpler and allows us to reuse all techniques that have been developed for finite state verification.
\begin{theorem}
Robustness for parallel programs over finite domains is \pspace-complete.
\end{theorem}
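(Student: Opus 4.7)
The plan is to establish the two matching bounds separately. For \pspace-hardness, nothing new is needed: the lower bound was already shown in \cite{BMM11} by a reduction from a \pspace-hard problem on finite-state concurrent systems, and that reduction applies verbatim here since the underlying programming model coincides. So the entire task reduces to exhibiting a \pspace\ algorithm for robustness of parallel programs over a finite data domain, and the key leverage is Theorem~\ref{Theorem:RobReach}.

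By Theorem~\ref{Theorem:RobReach}, $\aprog$ is robust iff, for every attack $\attack = (\attacker, \storeinst, \loadinst)$, the instrumented program $\aprog_\attack$ does not reach a goal configuration under SC. First I would bound the enumeration: the set of attacks is a subset of $\threaddom \times \labdom \times \labdom$, and in particular its cardinality is polynomial (actually quadratic) in the size of $\aprog$. Iterating over all attacks can therefore be done with a counter of polynomial size, and since \pspace\ is closed under such sequential composition, it suffices to show that a single feasibility check lies in \pspace.

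For a fixed attack $\attack$, the source-to-source construction of Sections~\ref{Section:Attacker} and~\ref{Section:Helpers} produces $\aprog_\attack$ whose code size is linear in that of $\aprog$. The instrumentation introduces only finitely many new addresses (one auxiliary $\delay$-address and one $\hb$-address per original address, plus $\anaddr_\attackstore$, $\successvar$, and $\hb$) and finitely many new values (the access tags $\noacc, \loadacc, \storeacc$, and $\mytrue$). Hence $\aprog_\attack$ is again a parallel program over a finite data domain, of size polynomial in the input. Its set of SC states is therefore finite and of at most exponential cardinality, and SC state reachability in finite-state concurrent programs is \pspace\ by \cite{Kozen77}. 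Composing: the outer enumeration over attacks together with the inner reachability query, reusing the same polynomial workspace across iterations, yields a \pspace\ procedure for robustness.

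The only mildly delicate point is verifying that the instrumentation does not covertly blow up the state space beyond exponential; this amounts to counting the fresh variables and the fresh control labels introduced by Equations~\eqref{Equation:StoreInst}--\eqref{Equation:Fence} and~\eqref{Equation:HelperOriginal}--\eqref{Equation:HelperSuccessCheck} and checking that each is polynomial in the size of $\aprog$ and constant per original instruction, which is routine. Combined with the hardness result of \cite{BMM11}, this yields \pspace-completeness. \qed
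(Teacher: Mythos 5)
Your proposal is correct and follows essentially the same route as the paper: the upper bound is obtained by enumerating the quadratically many attacks and, via Theorem~\ref{Theorem:RobReach}, reducing each feasibility check to SC state reachability in the linear-size, still finite-domain instrumented program $\aprog_\attack$, which is in \pspace{} by \cite{Kozen77}; the lower bound is inherited from \cite{BMM11}. The additional bookkeeping you do (counting fresh addresses, values, and labels, and noting closure of \pspace{} under polynomial sequential composition) is exactly the routine verification the paper leaves implicit when it states that the instrumentation is linear in size.
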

\paragraph{Parameterized Programs with Finite Domains}
Consider parameterized programs over finite domains.
In this setting, decidability of robustness was open (our techniques from \cite{BMM11} do not carry over).
With Theorem~\ref{Theorem:Parameter}, we can now solve the problem and establish decidability.
The key observation is that threads in instance programs never use their identifiers, simply because they are copies of the same source code.
This means there is no need to track the identity of threads, it is sufficient to count how many instances of a thread are in each state --- a technique known as counter abstraction \cite{GermanSistla1992}.
Using this technique, we can reformulate the reachability problem for parameterized programs as a coverability problem for Petri nets. We briefly recall the basics on Petri nets.

\paragraph{Definitions}
A \emph{Petri net} is a triple $\petrinet=(\places, \transitions, \weightfun)$ where $\places$ is a finite set of \emph{places}, $\transitions$ is a finite set of \emph{transitions} with $S\cap T = \emptyset$, and $\weightfun\colon (\places\times\transitions)\cup(\transitions\times\places)\to\Nat$ is a \emph{weight function}. A \emph{marking} is a function that assigns a natural number to each place: $\amarking\colon\places\to\Nat$. A \emph{marked Petri net} is a pair $(\petrinet,\amarking_0)$ of a Petri net and an \emph{initial marking} $\amarking_0$.
A transition $\atransition\in\transitions$ is \emph{enabled in marking $\amarking$} if $\amarking(\aplace)\geq\weightfun(\aplace,\atransition)$ for all $\aplace\in\places$.
The \emph{firing relation} $\fires{}\subseteq\Nat^{\power{\places}}\times\transitions\times\Nat^{\power{\places}}$ contains a tuple $(\amarking_1,\atransition,\amarking_2)$ if transition $\atransition$ is enabled in $\amarking_1$ and for all $\aplace\in\places$ we have $\amarking_2(\aplace) = \amarking_1(\aplace) - \weightfun(\aplace, \atransition) + \weightfun(\atransition,\aplace)$.
We also write $\amarking_1\fires{\atransition}\amarking_2$.
We extend the firing relation to sequences of transitions.

We say that a marking $\amarking$ is \emph{reachable} in a marked Petri net $(\petrinet,\amarking_0)$ if there is a transition sequence $\sigma\in\transitions^*$, such that $\amarking_0\fires{\sigma}\amarking$. A marking $\amarking$ is \emph{coverable} if there is a reachable marking $\amarking'$ so that $\amarking(\aplace)'\geq\amarking(\aplace)$ for all $\aplace\in\places$.
\begin{lemma}[\cite{Rackoff1978}]
\label{Lemma:CoverabilityForPetriNetsIsDecidable}
The problem to determine whether a marking $\amarking$ is coverable in a marked Petri net $(\petrinet,\amarking_0)$ is decidable.
\end{lemma}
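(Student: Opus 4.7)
The plan is to prove decidability by building the \emph{Karp--Miller tree} of $(\petrinet,\amarking_0)$ and checking whether any of its nodes dominates $\amarking$. The tree is rooted at $\amarking_0$ and its node labels are \emph{extended markings} in $(\Nat\cup\{\omega\})^{\power{\places}}$, where $\omega$ denotes arbitrarily large values with $\omega\geq n$ for every $n\in\Nat$. At a node labelled by $\amarking_1$, for every transition $\atransition$ enabled at $\amarking_1$ with $\amarking_1\fires{\atransition}\amarking_2$, I would add a child whose label is obtained as follows: if some strict ancestor on the path to the root carries an extended marking $\amarking'$ with $\amarking'\leq\amarking_2$ componentwise, replace every coordinate on which $\amarking_2$ strictly exceeds $\amarking'$ by $\omega$; otherwise keep $\amarking_2$ unchanged. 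A node whose label coincides with one of its strict ancestors is not expanded further.

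First, I would show that the resulting tree is finite. Its out-degree is bounded by the number of transitions, so by König's lemma an infinite tree would contain an infinite path, along which --- by the non-expansion criterion --- the extended markings form an infinite antichain with respect to the componentwise order. But $(\Nat\cup\{\omega\})^{\power{\places}}$ is a well-quasi-order by Dickson's lemma, so no such antichain exists --- a contradiction.

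Second, I would argue that $\amarking$ is coverable iff some node of the Karp--Miller tree carries an extended marking $\amarking^*$ with $\amarking^*(\aplace)\geq\amarking(\aplace)$ for every $\aplace\in\places$. \emph{Completeness} follows by induction on firing sequences: every reachable marking is dominated by the label of some node in the tree. \emph{Soundness} uses a pumping argument: whenever a node carries $\omega$ on a set of places $P$, the loop of transitions that introduced those $\omega$-entries can be iterated arbitrarily often, producing genuine reachable markings that agree with $\amarking^*$ outside $P$ and exceed any prescribed bound on $P$.

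The main obstacle is the finiteness argument combined with the correctness of $\omega$-acceleration, which must be shown not to witness spurious coverings; both hinge on carefully matching the acceleration step against actual pumpable loops in the firing relation. Once finiteness and correctness are in place, the decision procedure simply constructs the tree and scans its labels for a dominator of $\amarking$. Note that this Karp--Miller style algorithm is non-elementary in the worst case; the sharper \expspace{} upper bound of Rackoff requires a separate argument bounding the length of short witnessing firing sequences, which is not needed for mere decidability.
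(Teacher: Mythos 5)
The paper offers no proof of this lemma at all: it is imported wholesale from Rackoff~\cite{Rackoff1978}, whose argument bounds the length of a shortest covering firing sequence by induction on the number of places, yielding the \expspace{} upper bound directly. Your Karp--Miller construction is a genuinely different (and historically earlier) route: it buys a self-contained decidability proof, and in fact computes the entire downward closure of the reachability set, so one tree answers coverability for \emph{every} target marking $\amarking$ at once; what it gives up, as you correctly note, is any elementary complexity bound. Both routes are legitimate for the lemma as stated, which only claims decidability.

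One step in your sketch is wrong as written and needs repair. On an infinite path the node labels do \emph{not} form an antichain: when the acceleration rule fires because a strict ancestor carries $\amarking'\leq\amarking_2$ with strict inequality somewhere, the resulting child label has $\omega$ on the strictly larger coordinates and agrees with $\amarking_2$ elsewhere, hence it \emph{dominates} $\amarking'$ --- the two labels are comparable. The non-expansion criterion only forbids a label \emph{equal} to an ancestor's, which is far from yielding an antichain, so Dickson's lemma cannot be invoked the way you invoke it. The standard fix: observe that the set of $\omega$-coordinates is non-decreasing along any path (ordinary firing preserves $\omega$, acceleration only adds it), so along an infinite path it stabilizes after finitely many steps since $\power{\places}$ is finite. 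On the infinite suffix beyond that point, Dickson's lemma applied to the non-$\omega$ coordinates yields an ancestor--descendant pair $\amarking'\leq\amarking''$; equality would have terminated the path, and strict inequality would force a new $\omega$, contradicting stabilization. With that replacement the finiteness argument goes through, and the rest of your sketch (completeness by induction on firing sequences; soundness by nested pumping of the loops that introduced each $\omega$) is the standard and correct completion.
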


\paragraph{Reduction of parameterized reachability to Petri net coverability}
Let $\aprog$ be a parameterized program with finite data domain $\datadomain$.
We define a Petri net $\petrinet=(\places, \transitions, \weightfun)$ simulating the program.

For each pair of address and value $(\anaddr,\aval)\in\datadomain\times\datadomain$ we create a place $\aplace_{\anaddr,\aval}$. 
These places represent the state of the global memory: $\amarking(\aplace_{\anaddr,\aval}) = 1$ corresponds to $\valconf(\anaddr) = \aval$.

For each thread $\athread_i$ that declares registers $\listof{\areg_i}$ and has labels $\listof{\alab_i}$ we create places $\aplace_{\alab,\listof{\aval}}$ for all $\alab\in\listof{\alab_i}$ and all $\listof{\aval}\in\datadomain^{\power{\listof{\areg_i}}}$. 
These places encode the number of thread instances in the given control state that have the given register valuation.

For each thread $\athread_i$ we create a transition $\atransition_i$.
Let $\alab_{0,i}$ be the initial label of $\athread_i$, $\listof{\areg_i}$ be the registers declared by $\athread_i$, and $\listof{\aval_{0,i}}$ be a zero vector of length $\power{\listof{\areg_i}}$.
Then we set $\weightfun(\atransition_i, \aplace_{\alab_{0,i},\listof{v_{0,i}}}) = 1$.
Transition $\atransition_i$ effectively spawns an arbitrary number of copies of thread $\athread_i$ that are all in the initial state.

Next we create transitions that simulate the instructions in each thread.
We explain the construction for load instructions. 
The other instructions are handled along similar lines. 
Consider thread $\athread_i$ with registers $\listof{\areg_i}$ and a labelled load instruction $\mathit{linst}=\thetransition{\alab_1}{\alab_2}{\theload{\areg}{\afun_\anaddr(\listof{\areg_\anaddr})}}$\ .
For each value $\aval\in\datadomain$ and for each vector $\listof{\aval_{\mathit{reg}}}\in\datadomain^{\power{\listof{\areg_i}}}$ we create a transition $\atransition=\atransition_{\mathit{linst},\aval,\listof{\aval_\mathit{reg}}}$.
We set $\weightfun(\aplace_{\alab_1,\listof{\aval_{\mathit{reg}}}},\atransition) = \weightfun(\atransition, \aplace_{\alab_2,\listof{\aval_{\mathit{reg}}}'}) = 1$ where $\listof{\aval_{\mathit{reg}}}' = \listof{\aval_{\mathit{reg}}}[\areg := \aval]$.
Let $\anaddr = \afun_\anaddr(\projection{\listof{\aval_{\mathit{reg}}}}{\listof{\areg_\anaddr}})$.
Then we set $\weightfun(\aplace_{\anaddr,\aval},\atransition) = \weightfun(\atransition,\aplace_{\anaddr,\aval}) = 1$.
Transition $\atransition$ is enabled if there is an instance of the thread in control state is $\alab_1$ so that its register valuation is $\listof{\aval_\mathit{reg}}$ and address $\anaddr$ being read holds value $\aval$. Firing the transition only updates the state of the thread instance: its program counter is set to label $\alab_2$, and the value of register $\areg$ is set to $\aval$.

We define the initial marking by $\amarking_0(\aplace_{\anaddr,0})=1$ for all $\anaddr\in\datadomain$, and $\amarking_0(\aplace)=0$ for all other places $\aplace\in\places$. Reaching a goal configuration $\valconf(\successvar)=\mytrue$ in the parameterized program now corresponds to covering the following marking $\amarking_{\successvar}$ in the resulting Petri net: $\amarking_{\successvar}(\aplace_{\successvar,\mytrue}) = 1$ and $\amarking_{\successvar}(\aplace) = 0$ for all other places $\aplace\in\places$. 
Combining this reduction with Lemma~\ref{Lemma:CoverabilityForPetriNetsIsDecidable} gives Theorem~\ref{Theorem:ParameterizedRobustnessIsDecidable}.
\begin{theorem}\label{Theorem:ParameterizedRobustnessIsDecidable}
Robustness for parameterized programs over finite domains is decidable.
\end{theorem}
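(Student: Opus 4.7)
The plan is to chain the reductions already developed in the paper and finish with Lemma~\ref{Lemma:CoverabilityForPetriNetsIsDecidable}. By Theorem~\ref{Theorem:Parameter}, robustness of a parameterized program $\aprog$ reduces to deciding, for each attack $\attack=(\attacker,\storeinst,\loadinst)$, whether some instance $\aprog_{\attack}(I)$ reaches a goal configuration under SC. Since the number of candidate attacks is only quadratic in the syntactic size of $\aprog$, it suffices to give a decision procedure for a single fixed attack; the overall procedure enumerates all attacks and answers ``not robust'' as soon as one instance admits a goal configuration.

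Fix $\attack$ and consider $\aprog_{\attack}$. The key observation is that thread identifiers are not referenced in the source, so the exact identity of a thread instance is irrelevant: we only need to know how many copies of each thread sit in each local configuration (control label together with register valuation). Since the data domain $\datadomain$ is finite, there are finitely many such local configurations, and counter abstraction (à la \cite{GermanSistla1992}) faithfully represents the global state. I would encode this directly as a Petri net $\petrinet$:
\begin{itemize}
\item a place $\aplace_{\anaddr,\aval}$ for every $(\anaddr,\aval)\in\datadomain\times\datadomain$, carrying token count one on exactly one value per address to model $\valconf(\anaddr)$;
\item a place $\aplace_{\alab,\listof{\aval}}$ for every thread $\athread_i$, every label $\alab$ of $\athread_i$ and every register valuation $\listof{\aval}$;
\item a spawn transition $\atransition_i$ per thread that produces arbitrarily many tokens on the initial local-state place, modeling the parameter $I$;
\item one transition per instruction instance of $\aprog_{\attack}$, handled as in the load case sketched above: guards and effects on memory places read/write the appropriate $\aplace_{\anaddr,\aval}$, while guards and effects on local-state places move one thread-copy token from its source local configuration to its target.
\end{itemize}
The initial marking puts one token on each $\aplace_{\anaddr,0}$ and zero elsewhere; the target marking $\amarking_{\successvar}$ puts one token on $\aplace_{\successvar,\mytrue}$. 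Then a goal configuration is reachable in some instance of $\aprog_{\attack}$ iff $\amarking_{\successvar}$ is coverable in $\petrinet$, and Lemma~\ref{Lemma:CoverabilityForPetriNetsIsDecidable} closes the argument.

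The main obstacles are bookkeeping rather than conceptual. First, one must check that all instrumentation gadgets from Section~\ref{Section:Reduction} (auxiliary addresses $(\anaddr,\delay)$ and $(\anaddr,\hb)$, the access-level ordering $\noacc<\loadacc<\storeacc$, the $\hb$ and $\successvar$ flags, and the $\aval\neq 0$ early-read guard) are expressible in this finite-data setting so that each instrumented instruction maps to finitely many Petri-net transitions (one per relevant value tuple). Second, in the parameterized case the attacker role must be instantiated by exactly one copy of $\attacker$, all other copies behaving as helpers; this is enforced by the lock-based flag described just before Theorem~\ref{Theorem:Parameter}, and in the Petri net it is modeled by a dedicated ``attacker-token-taken'' place of capacity one that guards the transition moving a thread into the attacker code. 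Once these two points are verified, soundness and completeness of the encoding are a direct simulation argument between the SC semantics of $\aprog_{\attack}(I)$ and the firing relation of $\petrinet$, and decidability follows.
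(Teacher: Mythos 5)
Your proposal is correct and follows essentially the same route as the paper: reduce via Theorem~\ref{Theorem:Parameter} to SC reachability of a goal configuration in some instance of $\aprog_{\attack}$, apply counter abstraction to obtain a Petri net with address--value places, local-configuration places, spawn transitions, and per-instruction transitions, and conclude by decidability of coverability (Lemma~\ref{Lemma:CoverabilityForPetriNetsIsDecidable}). The initial and target markings you give coincide with the paper's, and your remarks on the single-attacker flag and the finiteness of the instrumentation gadgets match the paper's treatment.
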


\paragraph{Lower bound}
The upper bound on robustness for parameterized programs depends on the data domain.
Interestingly, an \expspace\ lower bound already holds for domains with two values.
The proof reduces the coverability problem in Petri nets to robustness of parameterized programs.
\expspace-hardness of coverability is a classic result by Lipton \cite{Lipton1976}.
That we can restrict ourselves to domains with two elements means the control flow in a parameterized program is expressive enough to encode the Petri net behaviour.

The idea behind the construction is to take thread instances as tokens.
Each thread has a label for each place in the Petri net, plus an additional label that indicates the token is currently not in use.
The Petri net transitions are mimicked by a controller thread.
It serialises the reading and writing of tokens, checks the coverability query, and then enters a non-robust situation.
To read and write tokens, the controller communicates with the token threads via the  memory.
The construction requires locked instructions, which are immediate to add to our programming model.
\begin{theorem}
Robustness for parameterized programs is \expspace-hard, for any domain with at least two elements.
\end{theorem}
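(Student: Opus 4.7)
The plan is to reduce coverability in Petri nets, which is EXPSPACE-hard already for polynomial-size nets~\cite{Lipton1976}, to non-robustness of parameterized programs. Given a net $\petrinet=(\places,\transitions,\weightfun)$ with initial marking $\amarking_0$ and target marking $\amarking$, I would build in polynomial time a parameterized program $\aprog$ over a two-element data domain so that $\amarking$ is coverable in $\petrinet$ iff $\aprog$ is not robust. Composed with Lipton's lower bound this yields the theorem for every domain of size at least two (larger domains trivially inherit the bound).

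The program declares two thread types. A \emph{token thread} has one control-flow label $\alab_p$ per place $p\in\places$, together with an idle label $\alab_\bot$; the parameterization supplies arbitrarily many copies of it, and the current marking is represented by how many token instances sit at each $\alab_p$. A \emph{controller thread}, of which only one copy ever acts (a Boolean leader flag combined with a locked test-and-set makes every other copy terminate immediately), first performs a bootstrap that promotes $\amarking_0(p)$ idle tokens to each $\alab_p$, and then enters a main loop that non-deterministically picks a transition $t\in\transitions$ and implements its firing by demoting $\weightfun(p,t)$ tokens from $\alab_p$ to $\alab_\bot$ for each input place $p$ and promoting $\weightfun(t,q)$ idle tokens to $\alab_q$ for each output place $q$. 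Each controller/token handshake is carried out through the two available shared addresses using locked compare-and-swap; place indices, when needed, are transmitted bit by bit through a protocol whose total length is polynomial in the size of $\petrinet$.

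Before every iteration of the main loop the controller runs a \emph{cover probe}: it tries to atomically consume $\amarking(p)$ tokens from each $\alab_p$ and then immediately return them, which succeeds iff the current marking covers $\amarking$. On success the controller jumps to a deliberately non-robust gadget, for example a Dekker-style two-store / two-load fragment played against a dedicated third thread; this gadget admits a TSO computation whose trace is not an SC trace, hence by Theorem~\ref{Theorem:CharacterizationRobustness} it constitutes a feasible attack. All other interactions in the program are routed through locked instructions, which act as fences under TSO, so the simulation itself cannot generate a happens-before cycle. Thus $\aprog$ is non-robust iff the controller ever reaches the gadget, iff some firing sequence of $\petrinet$ starting from $\amarking_0$ covers $\amarking$.

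The main obstacle is the second half of the correctness argument: showing that no violating TSO computation exists outside the gadget, so that the only feasible attack is the one we planted. This amounts to arguing that each handshake between controller and tokens is atomic and TSO-safe, and that partial interleavings of the bit-by-bit place-index transmission never fit the shape of a witness from Section~\ref{Section:Attacks}. The reverse direction is easy: monotonicity of coverability plus the unbounded supply of token threads from the parameterization guarantee that any firing sequence witnessing coverability can be replayed by a suitable instance of $\aprog$. Since all ingredients of the construction are polynomial in the size of $\petrinet$, Lipton's EXPSPACE-hardness transfers directly to robustness for parameterized programs.
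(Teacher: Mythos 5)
Your proposal follows essentially the same route as the paper: both reduce Petri net coverability (EXPSPACE-hard by Lipton) to non-robustness by representing tokens as parameterized thread instances with one control label per place plus an idle label, using a single controller thread that serialises transition firings through locked memory handshakes, and triggering a planted Dekker-style violation once the target marking is covered. The paper's own argument is likewise only a sketch at this level of detail, and it shares the obstacle you flag (ensuring the simulation phase itself admits no feasible attack, which both arguments discharge by routing all communication through locked instructions), so your attempt is a faithful match.
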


\end{document}